\title{Stability and semiclassics in self-generated fields}
\author{L\'aszl\'o Erd\H os
 \thanks{Partially supported by SFB-TR12 of
the German Science Foundation. {\text lerdos@math.lmu.de} }
\\Institute of Mathematics, University of Munich \\
Theresienstr. 39, D-80333 Munich, Germany \\
S\o ren Fournais \thanks{Work partially supported by the Lundbeck
  Foundation, the Danish Natural Science Research Council and the European 
Research Council under the
 European Community's Seventh Framework Program (FP7/2007--2013)/ERC grant
 agreement  202859.
{\text fournais@imf.au.dk}} \\ Department of Mathematical Sciences, Aarhus University\\
 Ny Munkegade 118, DK-8000 Aarhus, Denmark
\\ and \\
Jan Philip Solovej \thanks{Work partially supported
   by the Danish Natural Science Research Council and by a Mercator
   Guest Professorship from the German Science Foundation. {\text
solovej@math.ku.dk}}
\\ Department of Mathematics, University of Copenhagen\\
Universitetsparken 5, DK-2100 Copenhagen,
Denmark}
\date{Oct 18, 2011}
\newtheorem{theorem}{Theorem}[section]
\newtheorem{proposition}[theorem]{Proposition}
\newcommand{\rd}{{\rm d}}
\newcommand{\be}{\begin{equation}}
\newcommand{\ee}{\end{equation}}
\newcommand{\bey}{\begin{align}}
\newcommand{\eey}{\end{align}}
\newcommand{\beys}{\begin{align*}}
\newcommand{\eeys}{\end{align*}}
\newcommand{\bZ}{{\mathbb Z}}
\newcommand{\bsigma}{\mbox{\boldmath $\sigma$}}
\renewcommand{\iint}{\int \!\! \int}
\newcommand{\bR}{{\mathbb R}}
\newcommand{\bS}{{\mathbb S}}
\newcommand{\bC}{{\mathbb C}}
\newcommand{\bN}{{\mathbb N}}
\newcommand{\e}{\varepsilon}
\newcommand{\Tr}{{\rm Tr\;}}
\newcommand{\tr}{{\rm Tr\;}}
\newcommand{\wh}{\widehat}
\newcommand{\wt}{\widetilde}
\newcommand{\cF}{{\cal F}}
\newcommand{\cE}{{\cal E}}
\newcommand{\cD}{{\cal D}}
\newcommand{\cN}{{\cal N}}
\newcommand{\al}{\alpha}
\newcommand{\non}{\nonumber}
\begin{document}
\maketitle

\begin{abstract} 
  We consider non-interacting particles subject to a fixed external
  potential $V$ and a self-generated magnetic field $B$.  The total
  energy includes the field energy $\beta \int B^2$ and we minimize
  over all particle states and magnetic fields.  In the case of
  spin-$1/2$ particles this minimization leads to the coupled
  Maxwell-Pauli system.  The parameter $\beta$ tunes the coupling
  strength between the field and the particles and it effectively
  determines the strength of the field.  We investigate the stability
  and the semiclassical asymptotics, $h\to0$, of the total ground
  state energy $E(\beta, h, V)$. The relevant parameter measuring the
  field strength in the semiclassical limit is $\kappa=\beta h$.  We
  are not able to give the exact leading order semiclassical
  asymptotics uniformly in $\kappa$ or even for fixed $\kappa$. We do
  however give upper and lower bounds on $E$ with almost matching
  dependence on $\kappa$. In the simultaneous limit $h\to0$ and
  $\kappa\to\infty$ we show that the standard non-magnetic Weyl
  asymptotics holds. The same result also holds for the spinless case,
  i.e. where the Pauli operator is replaced by the Schr\"odinger
  operator.
\end{abstract}

\bigskip\noindent
{\bf AMS 2010 Subject Classification:} 35P15, 81Q10, 81Q20

\medskip\noindent
{\it Key words:} Semiclassical eigenvalue estimate,
Maxwell-Pauli system, Scott correction,

\medskip\noindent
{\it Running title:} Stability and semiclassics with self-generated field.

\section{Introduction}\label{sec:intro}

\bigskip

An important problem in spectral analysis is to determine or bound the sum of
the negative eigenvalues of a Schr\"odinger operator
$-\Delta-V(x)$, i.e., 
$$
\tr(-\Delta-V(x))_-
$$
under appropriate conditions on the potential $V$. We use the
convention that $x_-= (x)_- =\min\{ x, 0\}$ when $x$ is either a real
number or a self-adjoint operator.  This problem is of
particular interest in quantum mechanics as it gives 
 the ground state energy of a gas
of free fermions moving in the exterior potential $V$.

A generalization of this problem is to consider not only a potential
$V$ but also an exterior magnetic field given by the vector potential
$A$. The corresponding magnetic Schr\"odinger operator is 
$
(-i\nabla + A)^2-V(x)
$ acting in $L^2(\bR^d)$.
A further generalization is to consider the particles as having 
spin$-\frac{1}{2}$
and introduce the magnetic Pauli operator 
$[\bsigma\cdot (-i\nabla + A)]^2-V$, where in  $d=3$
dimensions $\bsigma=(\sigma_1, \sigma_2, \sigma_3)$ denotes the vector
of $2\times2$ Pauli matrices. The Pauli operator acts in $L^2(\bR^3;\bC^2)$.
Much work has gone into understanding the semiclassical asymptotics of
the sum of negative eigenvalues, i.e., the asymptotics for small $h>0$
of
$$
\tr((-ih\nabla + A)^2-V(x))_-\quad
\text{or}\quad \tr([\bsigma\cdot (-ih\nabla + A)]^2-V(x))_-.
$$
It is well known that under appropriate conditions on $A$ and $V$ the
leading behavior as $h$ tends to zero is given by the Weyl formulas
$$
(2\pi h)^{-d}\iint_{\bR^d\times\bR^d} (p^2-V(x))_-\rd x\rd p \quad
\text{or}\quad 2(2\pi h)^{-3}\iint_{\bR^3\times\bR^3} (p^2-V(x))_-\rd x\rd p,
$$
respectively. Here the factor of 2 on the second integral is due to
the spin degrees of freedom in the Pauli operator, i.e., the fact that
it is a $2\times2$-matrix valued operator.  Note that the limiting
semiclassical behavior is non-magnetic, i.e.  fixed magnetic fields do
not influence the leading order semiclassics.  For simplicity we will
consider the $d=3$ dimensional case only and we denote the
Schr\"odinger operator $T_h^{\rm S}(A)=(-ih\nabla + A)^2$ and the Pauli operator
$T_h^{\rm P}(A)=[\bsigma\cdot (-ih\nabla + A)]^2$. The magnetic field is
$B=\nabla\times A$.

In this paper we will address a related and equally important issue,
namely the case when the magnetic field is not a fixed external field,
but the self-generated classical magnetic field generated by the
particles themselves. 
 
We will consider the external potential $V$ to be a fixed (given)
function in $\bR^3$, we assume $V\in L_{loc}^1(\bR^3)$ 
and we will always work with the flat Euclidean
metric. The vector potential $A$ will be optimized to
minimize the total energy consisting of the energy of the particles
and the field energy
$$
\int B^2 =\int |\nabla\times A|^2
$$
(we use the convention that unspecified integrals are always
on $\bR^3$ w.r.t. the Lebesgue measure). 
The problem we consider is thus to determine the
energy
\begin{equation}\label{eq:energy}
E^{\rm S,P}(\beta,h,V)=\inf_A\left[\tr(T^{\rm S,P}_h(A)-V)_-+\beta\int |\nabla\times A|^2\right]
\end{equation}
for $\beta, h>0$, where the infimum runs over all vector fields $A\in
H^1(\bR^3;\bR^3)$; in fact minimizing only for all $A\in
C_0^\infty(\bR^3;\bR^3)$ gives the same infimum. See
Appendix~\ref{vectorfields} for a discussion of equivalent variational
spaces for this energy and for the precise definition of the operator
$T^{\rm S,P}_h(A)-V$ and the sum of its negative eigenvalues. We will
omit the superscripts S,P, when making general statements valid for
both the Schr\"odinger and Pauli cases.

Here $\beta$ is an additional parameter setting the strength of the
coupling of the particles to the field. In a given physical system the
values of $h$ and $\beta$ are given, but as is standard in
semiclassical analysis we leave them as free parameters. Formally
$\beta=\infty$ corresponds to the non-magnetic case; smaller $\beta$
means that a larger effect of the magnetic field is expected.

The Euler-Lagrange  equation corresponding to the variational problem
\eqref{eq:energy} above is 
\be
\beta\,\nabla\times B=J_A,
\label{maxwell}
\ee
where $J_A$ is the current of the Fermi gas, which in the
Schr\"odinger case is
$$
J_A(x)=-\text{Re}\,\left[(-ih\nabla+A)1_{(-\infty,0]}(T_h(A)-V)\right](x,x)
$$ 
and in the Pauli case is
$$
J_A(x)=-\text{Re}\,\left[\tr_{\bC^2}\left(\bsigma (\bsigma\cdot(-ih\nabla+A))1_{(-\infty,0]}(T_h(A)-V)\right)\right](x,x).
$$
In other words the Euler-Lagrange equations are the non-linear coupled
(time independent) Maxwell-Schr\"odinger or Maxwell-Pauli equations
where we deliberately ignored Gauss' equation for $V$ in order
to obtain a general result. In the application to large atoms discussed
below we will consider the special case when $V$ 
solves Gauss' equation.

There are three natural questions about the energy $E(\beta,h,V)$. First
of all we may ask whether the energy is finite, i.e., not negative
infinity. We refer to this as {\it stability}. In
Theorem~\ref{thm:stab} we give bounds on the energy in the Pauli case under 
essentially sharp assumptions on $V$. The corresponding results for the
Schr\"odinger case are well known and are also discussed in
Section~\ref{sec:results}. 

The second natural question is whether the
inclusion of the self-generated magnetic field will actually lower the
energy at all. For both the Pauli and the Schr\"odinger cases this is indeed 
true (see Appendix~\ref{sec:paramagnetism}), i.e., there exist
potentials $V$ and parameters $\beta,h$ (not necessarily the same for
the Pauli and Schr\"odinger cases) such that
\begin{equation}\label{eq:paramagnetism}
E(\beta,h,V)<E(\beta=\infty,h,V).
\end{equation}

The third and, actually, main question we address is how the inclusion
of the magnetic field influences the semiclassical asymptotics ($h\to
0$) for the sum of the negative eigenvalues.  Standard semiclassical
results typically assume that the physical data (external potential,
vector potential, metric etc.) are smooth. If as above these data
arise as self-generated and thus determined internally via a
variational principle the smoothness is not a-priori given.

In Theorem~\ref{thm:weak} we establish under the appropriate assumptions
on $V$ that the semiclassical, i.e., $h\to0$, asymptotics
of $E(\beta,h,V)$ in the case when $\beta h\to\infty$ simultaneously
with $h\to0$, is given by the standard non-magnetic Weyl formula. 
The case of large $\beta h$ is the case of greatest physical interest
as the magnetic field in general gives rise to a small effect. 
In \cite{EFS2} we give improved estimates on the error term
to the non-magnetic Weyl term.

It is however at least of mathematical interest to understand the
behavior of $E(\beta,h,V)$ also when $\beta h$ is smaller, i.e., when
the effect of the magnetic field is greater.  Unfortunately, we have
not been able to establish the exact semiclassical asymptotics in this
case.  We do show however in Theorem~\ref{thm:stab} again under
appropriate assumptions on $V$ that the Pauli energy $h^3E^{\rm
  P}(\beta,h,V)$ is bounded from above and below by functions of
$\kappa:=\beta h$.  These functions have almost matching asymptotics
in the regime of small $\beta h$. In proving the upper bound we rely
heavily on the construction of zero-modes in \cite{ES1}.  The similar
result for the Schr\"odinger case is well-known and, in fact, here the
bounds do not depend on $\kappa$ (see (\ref{schrstab}) below).

Theorem~\ref{thm:stab}, moreover
shows that for sufficiently small $\kappa = \beta h$ (depending on
$V$) the magnetic and non-magnetic energies are different, i.e.,
(\ref{eq:paramagnetism}) holds, since the former is independent of
$\kappa$, while the latter scales at least as $\kappa^{-3+\e}$ for any
$\e>0$.  In summary, we have established that $\beta \sim h^{-1}$ is
the correct threshold for the Pauli operator to observe the effect of
the magnetic field in the leading order semiclassics.

Our results will be stated and proved for the Pauli operator
and we will remark the modifications for the Schr\"odinger case.

\subsection{Applications to large atoms}

One of the main applications of precise semiclassical estimates is to 
investigate the ground state energy of large atoms and molecules.
It is a celebrated result of Lieb and Simon \cite{LS} (see also \cite{L}) that the energy
of a neutral atom or molecule with nuclear charge $Z$ obeys 
the Thomas-Fermi asymptotics, $-(const.)Z^{7/3}$, in the large
$Z$ limit.    The subleading correction to order $Z^2$ is known as the
Scott correction and was established for atoms 
in \cite{H,SW1} and for molecules
in \cite{IS} (see also \cite{SW2,SW3,SS}).
 The next term in the
expansion of order $Z^{5/3}$ was rigorously established for atoms in
\cite{FS}.  The large $Z$
asymptotics can be viewed as a semiclassical limit with $h=Z^{-1/3}$
being the semiclassical parameter. The Scott term thus corresponds to
the next order semiclassical estimate. In the results mentioned so far
on the large $Z$ asymptotics, magnetic interactions are ignored. 

Magnetic fields in this context were first
taken into account  as  external fields, either
a homogeneous one \cite{Y,LSY1, LSY2} (see also \cite{Sob1,Sob2,Sob3} for  improved
semiclassical estimates and \cite{Iv1,Iv2} for inclusion of the Scott correction)
or an inhomogeneous one \cite{ES1} but subject to certain
regularity conditions. Self-generated magnetic fields, obtained
from Maxwell's equation \eqref{maxwell} are not known to satisfy these conditions.
In \cite{ES3} the validity of
Thomas-Fermi theory was extended by allowing a self-generated magnetic field
that interacts with the electrons. This means that the focus was on 
 the absolute ground state of the system, after minimizing
for both the electron wave function and for the magnetic
field. Without going into details we mention that the Thomas-Fermi theory
can be viewed as the semiclassical approximation to 
our Maxwell-Pauli system but with the Gauss' equation included.
It was shown  that the additional magnetic field does
not change the leading order Thomas-Fermi energy. This
holds if $Z\alpha^2$ is sufficiently small, where $\alpha$ is
the fine structure constant; for large values of $Z\alpha^2$ 
the system is unstable (see \cite{ES3} for more details).

The semiclassical problem corresponding to a self-generated magnetic
field is exactly of the type \eqref{eq:energy} we discuss in the
present paper for $\beta h\to\infty$.  In order to establish the Scott
correction for self-generated magnetic fields, it is necessary to
establish the semiclassical expansion of $E(\beta, h, V)$ up to
subleading order if $\beta h^2$ is bounded from below. Such an
improved semiclassical estimate is proven in a separate paper
\cite{EFS2} and the Scott term asymptotics is proved in \cite{EFS3}.
\bigskip

{\bf Acknowledgment:} JPS thanks R.\ Seiringer and C.\ Hainzl for
fruitful discussions.
\section{Results}\label{sec:results}

In this section we will state and discuss our two main theorems. The
proofs are given in the following sections. 
\begin{theorem}[Stability bounds for the Pauli energy]\label{thm:stab}
Assuming  $V\in L^1_{loc}(\bR^3)$ and $[V]_+\in L^{5/2}(\bR^3)\cap L^4(\bR^3)$,  we have, for the
Pauli operator
\be
  h^3 E^{\rm P}(\beta, h, V)\geq -C \int [V]_+^{5/2} -C (\beta h)^{-3} \int [V]_+^4\label{paulistablow}.
\ee
for $C>0$ and all $h\in (0,\infty)$ and $\beta
\in(0, \infty]$. On the other hand if 
$V\in C_0^1(\bR^3)$ then for all $0<\varepsilon<1/3$, $h\in (0,\infty)$ and $\beta \in(0, \infty]$
\begin{equation}
h^3E^{\rm P}(\beta, h, V) \leq- C' \int [V]_+^{5/2} 
-C_\varepsilon(\beta h)^{-3+2\varepsilon}\int [V]_+^{4-\varepsilon} + \cE_V(h)\label{paulistabup}
\end{equation}
for positive constants $C'$ and  $C_\varepsilon$, the latter depending on $\varepsilon$,
and with an error function $h\mapsto \cE_V(h)$, depending only on
$\varepsilon$, $h$,and $V$
and satisfying $\lim_{h\to0}\cE_V(h) =0$.
\end{theorem}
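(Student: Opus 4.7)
I would prove the two bounds by quite different methods.

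\emph{Lower bound.} The starting point is a magnetic Lieb--Thirring inequality for the Pauli operator of the form
\[
\tr\bigl[T^{\rm P}_h(A) - V\bigr]_- \;\geq\; -C_1 h^{-3}\!\int [V]_+^{5/2} \;-\; C_2\, h^{-3/2}\Bigl(\!\int [V]_+^4\Bigr)^{1/4}\Bigl(\!\int |\nabla\times A|^2\Bigr)^{3/4},
\]
which is a variant of the Lieb--Loss--Solovej inequality used for stability of matter in magnetic fields; the proof uses IMS localization into cubes combined with a careful treatment of the Pauli zero modes at unit scale. Granted this inequality, one adds the field energy $\beta\int|\nabla\times A|^2$ and, writing $y=\int|\nabla\times A|^2$, reduces the problem to the elementary one-variable Young-type minimization of $\beta y - C_2 h^{-3/2}(\int[V]_+^4)^{1/4} y^{3/4}$ over $y\geq 0$. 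The minimum is attained at $y\sim(\beta h^{3/2})^{-4}\int[V]_+^4$ and equals $-C\beta^{-3}h^{-6}\int[V]_+^4$; multiplying by $h^3$ reproduces precisely the magnetic term $-(\beta h)^{-3}\int[V]_+^4$ in \eqref{paulistablow}.

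\emph{Upper bound.} This is the substantive direction and splits into two complementary trial configurations. For the non-magnetic term $-C'\int[V]_+^{5/2}+\cE_V(h)$ one takes the trial vector potential $A\equiv 0$; then $T^{\rm P}_h(0)$ reduces to two decoupled copies of $-h^2\Delta$, and a standard coherent-state or Weyl-type variational upper bound on $h^3\tr[-h^2\Delta-V]_-$ yields the claimed expression with $\cE_V(h)\to 0$ as $h\to 0$. For the magnetic improvement one must construct a better trial potential, and here the Erd\H os--Solovej zero-mode construction of \cite{ES1} is essential: tile the support of $[V]_+$ by disjoint cubes $Q_j$ of side length $\ell$, and in each $Q_j$ place a scaled, localized zero-mode pair $(A_j,\psi_j)$ arranged so that $A_j$ is compactly supported in $Q_j$, $\|\psi_j\|_{L^2}=1$, $T^{\rm P}_h(A_j)\psi_j=0$, and $\int|\nabla\times A_j|^2=O(h^2/\ell)$. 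Disjointness of supports implies $T^{\rm P}_h(\sum_j A_j)\psi_j=0$ and that the $\{\psi_j\}$ are orthonormal, so the variational principle gives $\tr[T^{\rm P}_h(A)-V]_-\le -\sum_j V(x_j)\approx -\ell^{-3}\int[V]_+$, while the field cost is of order $\beta h^2\ell^{-4}$ per unit volume of the support. Choosing $\ell=\ell(x)$ pointwise to balance gain against cost produces, after multiplication by $h^3$, a magnetic gain of order $-(\beta h)^{-3}\int[V]_+^4$.

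\emph{Main obstacle.} The principal difficulty lies in implementing the zero-mode construction rigorously and uniformly in $V$. Two issues force the $\varepsilon$-loss in the exponents: first, $\ell$ must be bounded below by an $h$-dependent quantity so that the zero-mode construction (semiclassical in its own right) remains valid and so that $V$ may be replaced by $V(x_j)$ on $Q_j$ with controllable error; second, the local cube sizes cannot adapt perfectly to arbitrary $V\in C_0^1$ without a smoothing loss, so in regions where $V$ is large the optimal scale $\ell\sim \beta h^2/V$ becomes too small and must be cut off. These effects combine to trade a factor $(\beta h)^{2\varepsilon}$ against a loss of $L^p$-exponent on $[V]_+$ from $4$ to $4-\varepsilon$. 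The lower bound, by contrast, is essentially mechanical once the magnetic Lieb--Thirring inequality above is established.
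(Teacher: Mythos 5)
Your lower bound argument matches the paper's: the magnetic Lieb--Thirring inequality of Lieb--Loss--Solovej (Theorem~\ref{thm:lls}) followed by optimizing over $\int B^2$ is exactly the paper's proof of \eqref{paulistablow}, and your Young-type balancing computation is correct.

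For the upper bound, your overall strategy (take $A\equiv0$ for the Weyl term; construct a trial configuration from disjointly supported, localized Pauli zero modes and optimize the length scale pointwise) is the same as the paper's, and your heuristic scaling $\ell\sim\beta h^2/V$ and the resulting $(\beta h)^{-3}\int[V]_+^4$ are right. But there is a genuine gap at the heart of the construction. You assert that in each cube one can place a pair $(A_j,\psi_j)$ with $A_j$ \emph{and} $\psi_j$ compactly supported (you need the latter to get disjointness, hence orthonormality and hence admissibility of $P=\sum|\psi_j\rangle\langle\psi_j|$ in the variational principle), $\|\psi_j\|_2=1$, and $T^{\rm P}_h(A_j)\psi_j=0$ \emph{exactly}. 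This is impossible: a compactly supported spinor annihilated by $\bsigma\cdot(-ih\nabla+A_j)$ must vanish identically, by unique continuation (indeed $\psi_j$ would be harmonic outside $\supp A_j$ and vanish near infinity, then be zero on an open set). The paper's Proposition~\ref{prop:compzeromode} carefully avoids this: the trial $\psi$ is a cutoff of a zero mode and satisfies only $\int|\bsigma\cdot(-ih\nabla+A)\psi|^2+\beta\int B^2\leq C_\delta h^2\beta^{1-\delta}R^{-1-\delta}$, with the cutoff producing a strictly positive kinetic term. To make that cutoff error small the zero mode must decay fast, which is exactly why the paper first proves Proposition~\ref{prop:zeromodedecay} (based on \cite{ES2}, not \cite{ES1}): for any $m$ one can build a compactly supported magnetic field whose Pauli operator has a smooth zero mode decaying like $|x|^{-m-1}$. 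The exponent loss $\delta\sim1/(2m)$ in Proposition~\ref{prop:compzeromode}, propagated through the optimization over ball radius, is what produces the $\varepsilon$-loss of the theorem. Your stated sources of the $\varepsilon$-loss (that $\ell$ must be bounded below for the zero-mode construction to be ``semiclassically valid'', and smoothing of $V$) misdiagnose it: the zero-mode construction is exact and scale invariant, not semiclassical; the smoothing of $V$ is handled in the paper by the auxiliary $\sqrt{h}$-cube decomposition and contributes only to the $o(1)$ term $\cE_V(h)$, not to the exponents. So while your outline is structurally right, you would need to replace ``exact compactly supported zero mode'' with ``cutoff of a fast-decaying zero mode, with quantitative kinetic error'' and redo the balancing accordingly; without this the trial state does not exist and the argument does not close.
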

For  reference, we mention the analogous result for the Schr\"odinger case. 
If $V\in L^{5/2}(\bR^3)$, we have for all $h>0$ and $\beta \in[0, \infty]$
\be
   -C\int [V]_+^{5/2} \le h^3 E^{\rm S}(\beta, h, V)  \le -\frac{1}{15\pi^2}\int [V]_+^{5/2}
+\cE_V(h)
\label{schrstab}
\ee 
with some positive constant $C$ and with an error function
depending on $V$ and $h$ and again satisfying $\lim_{h\to0}\cE_V(h)
=0$. The lower bound in (\ref{schrstab}) is the classical
Lieb-Thirring inequality \cite{LT}, which holds also for magnetic
Schr\"odinger operators (see e.g., \cite{S}) . The upper bound
is achieved by setting $A=0$ and using the standard Weyl semiclassical
estimate for the non-magnetic operator. 

Not only does 
Theorem~\ref{thm:stab} give the almost sharp asymptotics of $h^3E^{\rm
  P}(\beta,h,V)$ in the limit of small $\kappa=\beta h$, it also essentially gives the 
optimal condition on which $L^p$-norms of $V$ are needed to bound the energy 
$h^3E^{\rm P}(\kappa h^{-1},h,V)$ uniformly in $h$ for small $h$
and fixed  $\kappa$. {F}rom the
lower bound in the theorem the $L^{5/2}$ and $L^4$ norms bound the energy.
Conversely, from the upper bound all $L^p$ norms with $p\in[5/2,4)$
are needed since
if $V_n\in C_0^1$ is a sequence such that
$\|[V_n]_+\|_p\to\infty $ as $n\to\infty$ for some $p\in[5/2,4)$, there is a sequence $h_n$
tending to zero such that $h_n^3E^{\rm P}(\kappa h_n^{-1},h_n,V)\to-\infty$.

{\it Remark 1.} 
We can also ask a slightly different question: What $L^p$-norm  {\it condition} 
on  $V$ will ensure finiteness of the energy, but not necessarily a
{\it bound} in terms of this norm?
 For the Schr\"odinger operator the answer is that $V_+\in
L^{3/2}(\bR^3)$ ensures finiteness of $E^{\rm S}(\beta,h,V)$ and
this is essentially sharp as far as the local singularity
is concerned, since for the critical case
$V(x)=c|x|^{-2}$ the
energy is finite if $c\leq h^2/4$ and infinite otherwise. 
For the Pauli case we do not know a similar sharp result. 

If we ask instead for finiteness of only one eigenvalue, i.e., the
one-electron energy
\begin{equation}
E_0^{\rm S,P}(\beta,h,V)=\inf_A\left[\inf\text{ Spec}(T^{\rm S,P}_h(A)-V)
+\beta\int |\nabla\times A|^2\right]
\end{equation}
we have a sharp result for both Pauli and Schr\"odinger. Of course 
$E_0(\beta,h,V)\geq E(\beta,h,V)$ and hence instability for $E_0$
implies instability for $E$. The situation
for Schr\"odinger is exactly the same for $E^{\rm S}_0$ and $E^{\rm S}$, i.e.
if $V_+\in
L^{3/2}(\bR^3)$ then $E_0^{\rm S}$ is finite and for the critical case $V(x)=c|x|^{-2}$ the
energy is finite if $c\leq h^2/4$ and infinite otherwise.
 
For Pauli $E_0^{\rm p}$ is finite if $V_+\in L^3(\bR^3)$ and for the
critical case $V(x)=c|x|^{-1}$ there exists a critical value
$\gamma_{\rm cr}>0$ such that $E_0^{\rm p}$ is finite if
$c<\gamma_{\rm cr}\beta h^2$ and infinite if $c>\gamma_{\rm cr}\beta
h^2$. This is essentially contained in \cite{FLL} and we shall review
it briefly in Appendix~\ref{app:stab}.  We will also discuss in the
appendix that the sum of all eigenvalues $E^{\rm p}$ likewise remains
bounded for the cutoff Coulomb potential $V(x)=[c|x|^{-1}-1]_+$ if
$c>0$ is small enough.

Finally, we give the result of the exact spectral asymptotics in the
case of weak magnetic fields. 

\begin{theorem}[Semiclassics for weak fields]\label{thm:weak} 
Assume that $V\in L^{5/2}(\bR^3)\cap L^4(\bR^3)$ then
\be\label{eq:asymppauli}
\lim_{h\to0\atop \beta h\to \infty}  h^3 E^{\rm P}(\beta, h, V) = 
 \lim_{h\to0}  h^3 E^{\rm P}(\infty, h, V) = -\frac{2}{15\pi^2}\int [V(x)]_+^{5/2}\rd x.
\ee
Likewise for the Schr\"odinger case we have for $V\in L^{5/2}(\bR^3)$ 
\be\label{eq:asymp}
\lim_{h\to0\atop \beta h\to \infty}  h^3 E^{\rm S}(\beta, h, V) =  
\lim_{h\to0}  h^3 E^{\rm S}(\infty, h, V) = -\frac{1}{15\pi^2}\int [V(x)]_+^{5/2}\rd x.
\ee
\end{theorem}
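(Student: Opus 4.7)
The upper bounds in \eqref{eq:asymppauli} and \eqref{eq:asymp} both follow from the trivial trial choice $A\equiv 0$ in \eqref{eq:energy}: in that case $T_h^{\rm P}(0) = -h^2\Delta\otimes I_{\bC^2}$, $T_h^{\rm S}(0) = -h^2\Delta$, and the classical non-magnetic semiclassical Weyl formula
\begin{equation*}
\lim_{h\to 0} h^3 \tr_{L^2(\bR^3)}(-h^2\Delta - V)_- = -\frac{1}{15\pi^2}\int [V]_+^{5/2}\rd x,
\end{equation*}
valid for $V\in L^{5/2}(\bR^3)$, gives the desired $\limsup$ inequalities, with the extra factor of $2$ coming from the spin degrees of freedom in the Pauli case. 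The matching $\liminf$ in the Schr\"odinger case is then immediate from the diamagnetic inequality $\tr(T_h^{\rm S}(A)-V)_-\geq\tr(-h^2\Delta-V)_-$, valid for every $A\in H^1(\bR^3;\bR^3)$, which yields $E^{\rm S}(\beta,h,V)\geq\tr(-h^2\Delta-V)_-$ uniformly in $\beta$ and closes the sandwich.

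The Pauli $\liminf$ is the substantial part. Let $A^*$ be an approximate minimizer in \eqref{eq:energy} and set $B^*=\nabla\times A^*$. The stability bound of Theorem~\ref{thm:stab} combined with the $A\equiv 0$ upper bound shows that $h^3 E^{\rm P}(\beta,h,V)$ is uniformly bounded in the joint limit $h\to 0$, $\kappa=\beta h\to\infty$. Using the Pauli Lieb--Thirring-type inequality implicit in Theorem~\ref{thm:stab} (of the form $\tr(T_h^{\rm P}(A)-V)_- + \text{const}\cdot\int|B|^2\geq -Ch^{-3}$) together with the upper bound on $E^{\rm P}$, one obtains $\beta\int|B^*|^2\leq Ch^{-3}$ once $\beta$ is sufficiently large, and hence $h\|B^*\|_{L^2}=O(\kappa^{-1/2})\to 0$. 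Next, the operator identity $T_h^{\rm P}(A)=T_h^{\rm S}(A)+h\bsigma\cdot B$, the pointwise spin inequality $\bsigma\cdot B\geq -|B|$ on $\bC^2$, and the scalar diamagnetic inequality reduce the Pauli problem to a scalar Schr\"odinger one:
\begin{equation*}
\tr(T_h^{\rm P}(A^*) - V)_- \geq 2\tr_{L^2(\bR^3)}(-h^2\Delta - V - h|B^*|)_-.
\end{equation*}

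It remains to show that the magnetic perturbation $h|B^*|$ is too weak to shift the leading Weyl term as $h\to 0$, $\beta h\to\infty$. I would do this via a Duhamel-type formula writing $\tr(-h^2\Delta-V-h|B^*|)_- - \tr(-h^2\Delta-V)_- = -\int_0^h\int |B^*|(x)\rho_s(x)\,\rd x\,\rd s$, where $\rho_s(x)$ is the density of negative eigenstates of the intermediate operator $-h^2\Delta - V - s|B^*|$. A Lieb--Thirring-type pointwise bound $\rho_s(x)\leq Ch^{-3}[V+h|B^*|]_+^{3/2}(x)$, together with Cauchy--Schwarz and $V\in L^3(\bR^3)$ (by interpolation from $V\in L^{5/2}\cap L^4$), then bounds the correction by $Ch^{-2}\|B^*\|_{L^2}\|V\|_{L^3}^{3/2}$, which multiplied by $h^3$ is $O(h\|B^*\|_{L^2})\to 0$. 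Inserting the non-magnetic Weyl limit for $-h^2\Delta-V$, the spin factor $2$, and the nonnegativity of $\beta\int|B^*|^2$ then yields the matching $\liminf$. The main obstacle is making this last perturbative step rigorous: pointwise bounds on the density of states of a perturbed magnetic-free Schr\"odinger operator are subtle, and since $B^*$ is only a priori in $L^2(\bR^3)$ no $L^{5/2}$-type control on the perturbation $h|B^*|$ is directly available. A cleaner alternative would be to first regularize $V$ to a smooth $V_\eta$, use Theorem~\ref{thm:stab} to control the residual $V-V_\eta$, and then apply standard semiclassical estimates for the smooth $V_\eta$ together with the smallness of $h\|B^*\|_{L^2}$.
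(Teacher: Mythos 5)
Your upper bound (taking $A\equiv 0$ and invoking the non-magnetic Weyl formula, with the factor $2$ from spin in the Pauli case) is correct and coincides with the paper's.

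The lower bound, however, rests on a false inequality. You claim a diamagnetic inequality of the form $\tr(T_h^{\rm S}(A)-V)_-\geq\tr(-h^2\Delta-V)_-$ for every $A\in H^1$, and use it both for the Schr\"odinger case directly and, after the $\bsigma\cdot B\geq -|B|$ reduction, for the Pauli case. This inequality does \emph{not} hold for sums of eigenvalues; it holds only for the bottom of the spectrum. Indeed, Appendix~\ref{sec:paramagnetism} of this very paper exhibits an explicit counterexample: for $V(x)=|x|^2$ and a small constant field, $\tr((-i\nabla-A)^2+|x|^2-5/2)_-=3\sqrt{1+B^2}-4-B<\tr(-\Delta+|x|^2-5/2)_-=-1$ for small $B>0$. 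If your claimed inequality were true, \eqref{eq:paramagnetism} could never hold, and the Schr\"odinger case of Theorem~\ref{thm:weak} would be trivially uniform in $\beta$ -- which contradicts the fact that this is still only a conjecture (Remark~3 after the theorem). So the Schr\"odinger lower bound is not immediate, and the Pauli reduction to $2\tr_{L^2}(-h^2\Delta-V-h|B^*|)_-$ is unjustified.

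Even setting that aside, your second step has the gap you yourself flagged, and it is serious: the a priori bound from stability gives $h^2\int|B^*|^2\lesssim(\beta h)^{-1}h^{-3}$ but nothing in $L^p$ for $p>2$, while the Duhamel/Lieb--Thirring estimate you sketch needs control of $h|B^*|$ in $L^{5/2}$ or stronger. Cauchy--Schwarz against $\rho_s\lesssim h^{-3}[V+h|B^*|]_+^{3/2}$ would require $\int|B^*|\,[V+h|B^*|]_+^{3/2}$ and in particular $h^{3/2}\int|B^*|^{5/2}$, which is not controlled. The paper avoids both difficulties by working entirely locally: it tiles the support of $V$ by cubes of side $L$ with $h\le L\le h^{1/2}$, gauges away the local average $\langle A\rangle_k$ on each cube, uses the Poincar\'e inequality to bound $\int_{\wt Q_k}(A-\langle A\rangle_k)^2$ and (via Sobolev) $\int_{\wt Q_k}(A-\langle A\rangle_k)^5$ by the local field energy, and then applies a Schwarz split $T_h(A-\langle A\rangle_k)\ge -(1-\e_k)h^2\Delta - C\e_k^{-1}(A-\langle A\rangle_k)^2$ to turn the magnetic kinetic energy into a non-magnetic one plus a potential paid for by the field energy. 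That local linearization is the key device your global argument lacks, and it is precisely what makes the condition $\beta h\to\infty$ enter (through the optimal choices $L=h(\beta h)^{1/10}$ and $\e_k=\beta^{-1/2}L^{-1/2}\cF_k^{1/2}$).
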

{\it Remark 2.} This result is
a strengthening of Theorem 1.3 from \cite{ES3}, where the
same conclusion was proved under the condition $\beta h^2\ge c>0$.

{\it Remark 3.}
We conjecture that in case of the Schr\"odinger operator the Weyl term is
the correct asymptotics uniformly in $\beta$, i.e.
if $V\in L^{5/2}(\bR^3)$, we conjecture that
\be
\lim_{h\to0}{h^3 E^{\rm S}(\beta,h,V)}=-\frac{1}{15\pi^2}\int [V(x)]_+^{5/2}\rd x
\label{conj2}
\ee
holds  uniformly
in $\beta \in [0,\infty)$.
The upper bound \eqref{paulistabup} shows
that the same statement cannot hold for the Pauli case.

{\it Remark 4.}  Subleading error estimates are established first in \cite{EFS2}
and later in \cite{Iv3}.

{\it Remark 5.} The subleading error estimates in \cite{EFS2} are used in \cite{EFS3}
to give the two term energy asymptotics, i.e., up to the Scott correction, in the large nuclear charge limit
for atoms and molecules in a self-generated magnetic field.

\section{Stability bounds: Proof of Theorem~\ref{thm:stab}}
\label{sec:stab}

The lower bound in Theorem~\ref{thm:stab} establishes 
 the stability of the
system, i.e., boundedness from below of the energy. 
As the sum of
the negative eigenvalues for the Pauli operator $T^{\rm P}_h(A)-V$  goes to
minus infinity (if $[V]_+\ne 0$) as the magnetic field increases (e.g. for a constant
magnetic field \cite{AHS}), the addition of the
field energy is necessary
for stability.  Moreover, stability will also require sufficient decay of the
potential $V$ at infinity and control of the local singularities.

\medskip

 The basic tool for the proof in the Pauli case is 
the magnetic Lieb-Thirring
inequality 
from \cite{LLS}:
\begin{theorem}[Magnetic Lieb-Thirring inequality~\cite{LLS}]\label{thm:lls}
There exists a universal constant $C>0$ such that
for the Pauli operator  $T_h(A)-V$ 
with a potential $V\in L_{loc}^1(\bR^3)$ and $V_+\in L^{5/2}(\bR^3)\cap L^4(\bR^3)$ and magnetic field $B=\nabla\times A
\in L^2(\bR^3)$ we have for all $h>0$
\be
    \tr\big[ T_h(A)-V\big]_- \ge - Ch^{-3}\int \big[ V\big]_+^{5/2} 
 - C\Big( h^{-2} \int |B|^2\Big)^{3/4}
\Big( \int  \big[ V\big]_+^{4} \Big)^{1/4}.
\label{genlt}
\ee
\end{theorem}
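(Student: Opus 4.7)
The right-hand side of \eqref{genlt} combines two contributions: the standard Schr\"odinger Lieb--Thirring term $h^{-3}\int [V]_+^{5/2}$ and a magnetic correction involving $(h^{-2}\int |B|^2)^{3/4}(\int [V]_+^4)^{1/4}$. Using the identity $T_h(A)=(-ih\nabla+A)^2+h\bsigma\cdot B$, the naive bound $h\bsigma\cdot B\ge -h|B|$ plugged into the magnetic Lieb--Thirring inequality for $(-ih\nabla+A)^2$ would force an $L^{5/2}$ norm of $B$, whereas we only have $L^2$ control from the field energy. The whole point is to avoid this overshoot, so the proof must exploit the structural fact that $T_h(A)\ge 0$ and that $[\bsigma\cdot(-ih\nabla+A)]^2$ has potentially many zero modes.

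I would first reduce to an eigenvalue counting (CLR--type) inequality via the Aizenman--Lieb layer cake identity
\beys
\tr[T_h(A)-V]_-=-\int_0^\infty N\bigl(T_h(A)-(V-\mu)\le 0\bigr)\,\rd\mu,
\eeys
and aim to establish
\beys
N(T_h(A)-W\le 0)\le C h^{-3}\int W_+^{3/2}+C h^{-3/2}\Big(\int |B|^2\Big)^{3/4}\Big(\int W_+^{3}\Big)^{1/4}
\eeys
for all non-negative $W$. Applied to $W=(V-\mu)_+$ and integrated in $\mu\in(0,\infty)$, the first piece gives $h^{-3}\int[V]_+^{5/2}$ directly, while the second is rearranged by a H\"older / Minkowski-type manipulation of $\int_0^\infty (\int(V-\mu)_+^3\rd x)^{1/4}\rd\mu$ to produce $(\int[V]_+^4)^{1/4}$.

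The counting inequality I would prove by a Birman--Schwinger analysis with coherent states at a local length scale $\ell(x)$ adapted to the local field: $\ell(x)\sim(h/|B(x)|)^{1/2}$ where $B$ is strong, and $\ell(x)\sim h\, W(x)^{-1/2}$ where the potential dominates. Using an IMS localization, one bounds the Birman--Schwinger kernel $W^{1/2}(T_h(A)+\tau)^{-1}W^{1/2}$ on each cube separately. In the potential-dominated cubes the diamagnetic inequality reduces everything to the standard CLR for $-h^2\Delta$, yielding the $W^{3/2}$ term. In the field-dominated cubes one uses the lowest-Landau-level / Aharonov--Casher picture of the Pauli kernel combined with a Sobolev embedding to produce the $(\int|B|^2)^{3/4}(\int W^3)^{1/4}$ contribution.

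The main obstacle is precisely the field-dominated regime: the Pauli zero modes make direct comparison with $(-ih\nabla+A)^2-h|B|$ useless, and one has to use the full operator-sense positivity $T_h(A)=[\bsigma\cdot(-ih\nabla+A)]^2\ge 0$ together with the off-diagonal structure $\bsigma\cdot(-ih\nabla+A)$ to trade the missing $L^{5/2}$ control on $B$ against the one additional power of $W$ (turning $W^{5/2}$ into $W^3$ and matching it to $(\int B^2)^{3/4}$ by H\"older). Balancing the cube sizes via a Besicovitch-type covering and optimizing the threshold between the two regimes is the technical heart of the estimate.
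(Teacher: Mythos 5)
The paper does not actually prove Theorem~\ref{thm:lls}: it is quoted from~\cite{LLS} and used as a black box, so there is no internal proof to compare your attempt against. Assessing your proposal on its own merits, it contains a fatal structural gap at the layer-cake step. You propose the counting bound
\begin{equation*}
N\bigl(T_h(A)-W\le 0\bigr)\le C h^{-3}\int W_+^{3/2}+C h^{-3/2}\Bigl(\int |B|^2\Bigr)^{3/4}\Bigl(\int W_+^{3}\Bigr)^{1/4}
\end{equation*}
and then want to integrate it in $\mu$ with $W=(V-\mu)_+$ and rearrange the second term into $\bigl(\int [V]_+^4\bigr)^{1/4}$. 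No H\"older/Minkowski manipulation can do this, for a homogeneity reason: under $V\mapsto\lambda V$ (substitute $\mu=\lambda\nu$) the quantity $\int_0^\infty\bigl(\int(V-\mu)_+^3\,\rd x\bigr)^{1/4}\rd\mu$ scales like $\lambda^{7/4}$, whereas $\bigl(\int[V]_+^4\bigr)^{1/4}$ scales like $\lambda$. More generally, if the magnetic piece of a counting bound is $p$-homogeneous in $W$, the layer-cake integral is $(p+1)$-homogeneous in $V$; to land on the $1$-homogeneous target $(\int B^2)^{3/4}(\int[V]_+^4)^{1/4}$ one would need $p=0$. So no CLR-type inequality can deliver the $L^4$ power of $V$ through the Aizenman--Lieb identity, which means the overall strategy cannot close.

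Independently of this, the proposed counting inequality is itself dimensionally inconsistent: $N(T_h(A)-W\le 0)$ and $h^{-3}\int W_+^{3/2}$ are both invariant under $A(x)\mapsto s^{-1}A(x/s)$, $W(x)\mapsto s^{-2}W(x/s)$ (hence $B(x)\mapsto s^{-2}B(x/s)$), while $h^{-3/2}(\int B^2)^{3/4}(\int W_+^3)^{1/4}$ picks up a factor $s^{-3/2}$. Letting $s\to\infty$ would force the pure CLR bound $N\le Ch^{-3}\int W_+^{3/2}$ for the Pauli operator, which is false precisely because of the Aharonov--Casher zero modes you yourself flag. So the target of your Birman--Schwinger/coherent-state analysis is unattainable as stated. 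For the record, the actual argument in~\cite{LLS} does not pass through a counting bound at all: working with $D=\bsigma\cdot(-ih\nabla+A)$, one introduces a running mass parameter to lift the kernel of $D$, passes from $D^2$ to $|D|$ via a Birman--Koplienko--Solomjak-type inequality, and applies a Daubechies-type Lieb--Thirring bound for $|D|$ together with the $L^2$ control on $B$; the $L^4$-norm of $V$ and the $3/4$-power of $\int B^2$ emerge from optimizing the running parameter, not from any layer-cake reduction. Your diagnosis of the difficulty (the zero modes and the need to use $D^2\ge 0$ rather than $P_A^2-|B|$) is sound, but the CLR route cannot reach~\eqref{genlt}.
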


{\it Proof of the lower bound in Theorem~\ref{thm:stab}.}
Using the magnetic Lieb-Thirring inequality \eqref{genlt}, we have
\begin{align}
   \tr \big[T_h(A)-V\big]_- & +  \beta \int |\nabla \times A|^2
\non\\ & \ge -Ch^{-3}\int [V]_+^{5/2} - Ch^{-3/2} \Big(\int [V]_+^4\Big)^{1/4}\Big(  
  \int B^2\Big)^{3/4}
+  \beta \int B^2 
\label{LLSLT}\\
&\ge -Ch^{-3}\int [V]_+^{5/2} - Ch^{-3}(\beta h)^{-3}\int [V]_+^4,
\non
\end{align}
where we set $B= \nabla\times A$ and optimized over $\int B^2$.
\qed

\subsection{Upper bound in Theorem~\ref{thm:stab}}

In order to construct a trial state that will give the upper bound in
Theorem~\ref{thm:stab} we first use the method of \cite{ES2} to show
that there exist compactly supported magnetic fields and corresponding
Pauli operators with zero-modes with arbitrarily
fast decay. The original construction of zero-modes 
in \cite{LY} will neither lead to compactly supported magnetic fields
nor arbitrarily fast decaying modes.
\begin{proposition}\label{prop:zeromodedecay} Given $m\in{\bN}$. There exist a smooth magnetic
  field of compact support $B:\bR^3\to\bR^3$ with a corresponding
  smooth vector potential
  $A:\bR^3\to\bR^3$, a smooth and non-vanishing $\psi\in C^\infty(\bR^3;\bC^2)$, and a constant $C>0$ such that 
$$
\bsigma\cdot(-i\nabla+A)\psi=0
$$
and 
$$
|\psi(x)|\leq C|x|^{-m-1}.
$$
\end{proposition}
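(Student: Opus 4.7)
The plan is to build on the zero-mode construction of \cite{ES2}, which already produces smooth, compactly supported magnetic fields $B_0$ admitting smooth, nowhere-vanishing Pauli zero modes $\psi_0$ on $\bR^3$; the new ingredient here is the prescribed decay rate. Outside the support of $B_0$, on the simply-connected complement, $A_0$ can be gauged to zero, so that $\psi_0$ satisfies the free Weyl equation $\bsigma\cdot(-i\nabla)\psi_0=0$. Squaring yields $-\Delta\psi_0=0$, so each component of $\psi_0$ is harmonic outside a compact set and admits a multipole expansion
\[
\psi_0(x)=\sum_{\ell\geq 0}|x|^{-\ell-1}\,\Phi_\ell(x/|x|),
\]
whose leading non-zero term dictates the rate of decay at infinity.

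To achieve the decay rate $|x|^{-m-1}$, it suffices to cancel the lowest $m$ multipoles. There is a natural basis of decaying solutions of the free Weyl equation on $\bR^3\setminus\{0\}$ of the form $|x|^{-\ell-1}\Phi_\ell(x/|x|)$, expressible through spin-weighted spherical harmonics. Let $\chi$ be a smooth radial cutoff, equal to $1$ outside a large ball and vanishing near $\supp(B_0)$. I would set
\[
\psi:=\psi_0-\chi\sum_{\ell<m}\bigl(\ell\text{-th multipole of }\psi_0\bigr),
\]
so that $\psi$ agrees with $\psi_0$ on $\supp(B_0)$ (hence is non-vanishing there and is a zero mode for $A_0$), while in the far exterior $\psi$ satisfies the free Weyl equation with the prescribed decay. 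On the transitional annulus one then solves $\bsigma\cdot(-i\nabla+A)\psi=0$ for a smooth real potential $A$ that extends $A_0$ and is supported in a compact set, yielding a smooth compactly supported magnetic field $B=\nabla\times A$.

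The main technical obstacle is the pointwise solvability of $\bsigma\cdot A\,\psi=i\bsigma\cdot\nabla\psi$ for a real $A\in\bR^3$ given a nonvanishing $\psi\in\bC^2$. This is four real equations in three real unknowns $A_1,A_2,A_3$, and is therefore subject to one scalar compatibility condition at each point (essentially the vanishing of a certain projection of $i\bsigma\cdot\nabla\psi$ along $\psi$). If $\psi$ does not automatically satisfy it, one restores it by a gauge transformation $\psi\mapsto e^{i\varphi}\psi$, which shifts $A$ by $\nabla\varphi$; the compatibility condition then reduces to a scalar elliptic equation for $\varphi$ on the annulus, solvable by standard theory since $\psi$ remains non-vanishing there. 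Non-vanishing of $\psi$ throughout is guaranteed by scaling the subtracted multipoles to be small relative to $\psi_0$ on the annulus. The resulting $A$ and $B=\nabla\times A$ are smooth and compactly supported, and $\psi$ satisfies the required bound $|\psi(x)|\leq C|x|^{-m-1}$, completing the construction.
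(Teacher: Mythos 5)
Your overall strategy---start from a known compactly supported zero mode $\psi_0$, use the multipole expansion in the simply connected exterior, subtract the low multipoles with a cutoff, and repair the Weyl equation by choosing a new vector potential on the transition annulus---is genuinely different from what the paper does. The paper instead pushes the whole problem through the Hopf map $\Phi:\bR^3\to\bC\cup\{\infty\}$ and uses the two-dimensional Aharonov--Casher theorem: choosing the flux to equal $m+\tfrac12$ produces $m$ zero modes on $\bC$, and the fastest-decaying one, pulled back with the conformal factors of the two stereographic projections, has exactly the decay $|x|^{-m-1}$. That route never needs to solve for $A$ given $\psi$.

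Unfortunately, the step in your argument where you reconstruct $A$ on the annulus has a genuine gap. Solving $\bsigma\cdot A\,\psi=i\bsigma\cdot\nabla\psi$ pointwise for a \emph{real} $A\in\bR^3$ is, as you say, four real equations in three unknowns, and the single scalar compatibility condition is
\begin{align}
\operatorname{Im}\langle\psi,\,i\bsigma\cdot\nabla\psi\rangle
= \operatorname{Re}\langle\psi,\bsigma\cdot\nabla\psi\rangle
= \tfrac12\,\Div\langle\psi,\bsigma\,\psi\rangle = 0,
\end{align}
i.e.\ the spin density $\langle\psi,\bsigma\,\psi\rangle$ must be divergence-free. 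This quantity is \emph{gauge invariant}: under $\psi\mapsto e^{i\varphi}\psi$ one has $\langle e^{i\varphi}\psi,\bsigma e^{i\varphi}\psi\rangle=\langle\psi,\bsigma\psi\rangle$, so its divergence is unchanged and the ``scalar elliptic equation for $\varphi$'' you invoke does not exist---a gauge transformation of $\psi$ can never restore the compatibility condition. Moreover, with your ansatz $\psi=\psi_0-\chi\Sigma$ (where $\Sigma$ is the sum of the subtracted multipoles) the condition genuinely fails on the annulus: there $\bsigma\cdot\nabla\psi_0=\bsigma\cdot\nabla\Sigma=0$, so $\bsigma\cdot\nabla\psi=-(\nabla\chi\cdot\bsigma)\Sigma$ and one computes
\begin{align}
\tfrac12\,\Div\langle\psi,\bsigma\psi\rangle
= -\operatorname{Re}\langle\psi_0,(\nabla\chi\cdot\bsigma)\Sigma\rangle
 +\tfrac12\,\nabla(\chi^2)\cdot\langle\Sigma,\bsigma\Sigma\rangle,
\end{align}
which is generically nonzero. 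So the annulus patching step, as written, does not produce a real $A$; you would instead need to modify $\psi$ by something more than a phase (e.g.\ a real conformal factor, which shifts the divergence by a transport term) and then check that this modification is compatible with the boundary conditions forcing $\psi$ to match $\psi_0$ on one side and the corrected tail on the other---a nontrivial global solvability question that your proposal does not address.
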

\begin{proof}
We use the construction and notations from \cite{ES2}. Consider the map
$$
\Phi:\bR^3\ni x=(x_1,x_2,x_3)\mapsto 2\frac{x_3+i(-1+|x|^2/4)}{x_1+ix_2}\in\bC\cup\{\infty\}.
$$
This map is $\Phi=\tau_2\circ\phi\circ\tau_3^{-1}{}_{|\bR^3}$, where
$\tau_3:\bS^3\to\bR^3\cup\{\infty\}$, and
$\tau_2:\bS^2\to\bR^2\cup\{\infty\}$ are stereographic
projections (we have identified $\bC$ and
$\bR^2$) and $\phi:\bS^3\to\bS^2$ is the Hopf map (see Lemma 34 in
\cite{ES2}). Note that 
\begin{equation}\label{eq:lowerPhi}
  1+\frac14|\Phi(x)|^2=\frac{(1+|x|^2/4)^2}{x_1^2+x_2^2}\geq |x|^2/16.
\end{equation}
Consider the real 2-form on $\bC$, which by stereographic projection
pulls back to ($1/4$ times) the volume form on $\bS^2$, i.e., 
$$
\omega=\frac18(1+|z|^2/4)^{-2}idz\wedge d\overline{z}.
$$
We have $\int_\bC\omega=\pi$. For $g$ a real smooth compactly
supported function $g\in
C_0^\infty(\bC)$ we define the 2-form
$$
\beta_3=\Phi^*(g\omega)
$$
on $\bR^3$. Note that
$\beta_3$ is a closed 2-form. We can therefore define a (divergence
free) vector field $B$ such that ${B}\cdot({\bf X}\times {\bf
  Y})=\beta_3({\bf X},{\bf Y})$
for all vector fields ${\bf X}$ and ${\bf Y}$. 
By (\ref{eq:lowerPhi}), $\beta_3$ and hence $B$ has
compact support in $\bR^3$. Moreover we will assume that 
$(2\pi)^{-1}\int_\bC g\omega=m+\frac12$. 

Consider also the real 2-form $\beta_2=(g-1)\omega$ on $\bC$. It
satisfies $(2\pi)^{-1}\int_\bC \beta_2=m$. 
If we set
$$
h(z)=\pi^{-1}\int_\bC\ln|z-z'|^2\beta_2(z'),
$$
then $\beta_2=d\alpha_2$, where $\alpha_2$ is the real 1-form
$\alpha_2=2\mbox{Re}\left(\frac{i}{4}\partial_{\overline{z}}h(z)d\overline{z}\right)$.
According to the Aharonov-Casher Theorem (See Theorem 37 and Appendix
A in \cite{ES2}) the magnetic Dirac-operator on $\bC$ corresponding to the
metric with volume form $\omega$ and conformal to the standard metric
and one form (magnetic vector potential) $\alpha_2$ has a (unnormalized)
zero-mode, i.e., element in the kernel, of the form $f(z)\begin{pmatrix}1\\0
\end{pmatrix}$, where 
$$
f(z)=(1+|z|^2/4)^{1/2}\exp(-h(z)/4).
$$
In the standard metric on $\bC$ the same would be true with a zero-mode without the
prefactor $(1+|z|^2/4)^{1/2}$, which comes from the conformal factor
according to Theorem 23 in \cite{ES2}.

Note that $|f(z)|\leq C|z|^{1-m}$. In fact, there will be $m$
zero-modes, but we need only the one given above which has the fastest
decay. 

Let us turn to the construction of the zero-mode in $\bR^3$. Let ${\bf
  \Omega}:\bR^3\to\bR^3$ be the vector field corresponding to the 2-form
$\Phi^*(\omega)$, i.e., such that ${\bf \Omega}\cdot({\bf X}\times
{\bf Y})=\omega(\Phi_*({\bf X}),\Phi_*({\bf Y}))$. We may
choose a smooth unit vector field $\xi:\bR^3\to\bC^2$ such that
$\bsigma\cdot{\bf \Omega}\xi=|{\bf \Omega}|\xi$. In fact, an explicit
choice is 
$$
\xi(x)=(1+|x|^2/4)^{-1/2}\left(1+\frac{i}2\bsigma\cdot x\right)\begin{pmatrix}1\\0
\end{pmatrix}.
$$
According to
Section~8 in \cite{ES2}
we can find a smooth $A$ with $\nabla\times A=B$ and such
that $\bsigma\cdot(-i\nabla+A)\psi=0$, where
$$
\psi(x)=(1+|x|^2)^{-1}f(\Phi(x))\xi(x).
$$
Here again $(1+|x|^2)^{-1}$ is the conformal factor coming from the
stereographic projection $\tau_3$ (see Theorem 23 in \cite{ES2}).
Thus we see that $|\psi(x)|\leq C|x|^{-m-1}$.
\end{proof}
We can use this proposition to construct a low energy one-electron state
localized in a ball. 
\begin{proposition}\label{prop:compzeromode}
For all  $0<\delta<1$ there is a constant $C_\delta>0$ such that for all
$h,\beta>0$ and
all balls $B_R$ with radius $R$ we can find a smooth magnetic field $B$ supported in
$B_R$ such that for all smooth magnetic vector potentials $A$ with
$\nabla\times A=B$ in $B_R$ we can find an $L^2$-normalized $\psi\in
C_0^\infty(B_R,\bC^2)$ satisfying
$$
\int|\bsigma\cdot(-ih\nabla+A)\psi|^2+\beta\int B^2\leq C_\delta h^2\beta^{1-\delta}R^{-1-\delta}.
$$
\end{proposition}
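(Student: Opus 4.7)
The plan is to build the trial state directly from the polynomially decaying zero-modes of Proposition~\ref{prop:zeromodedecay}, rescaled and cut off. Given $\delta\in(0,1)$, first choose an integer $m$ with $1/(2m)\le\delta$ and apply Proposition~\ref{prop:zeromodedecay} to obtain a smooth, compactly supported $B_0$ (say with $\supp B_0\subset B_1$), a smooth vector potential $A_0$, and a non-vanishing smooth $L^2$ zero-mode $\psi_0\in C^\infty(\bR^3;\bC^2)$ with $|\psi_0(x)|\le C|x|^{-m-1}$ satisfying $\bsigma\cdot(-i\nabla+A_0)\psi_0=0$. For a parameter $r\in(0,R/2]$ to be optimized, rescale with $\lambda=1/r$ by $\psi^\lambda(x)=\lambda^{3/2}\psi_0(\lambda x)$, $A^\lambda(x)=\lambda A_0(\lambda x)$, $B^\lambda(x)=\lambda^2 B_0(\lambda x)$; then $\supp B^\lambda\subset B_r\subset B_{R/2}$, $\|\psi^\lambda\|_{L^2}=\|\psi_0\|_{L^2}$, and the zero-mode relation is preserved. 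Finally incorporate $h$ by taking $B:=hB^\lambda$ and $\tilde A:=hA^\lambda$, so that $\bsigma\cdot(-ih\nabla+\tilde A)\psi^\lambda=0$.

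Next, localize with a smooth cutoff $\chi_R(x):=\chi(x/R)$ where $\chi\in C_0^\infty(\bR^3)$, $\chi\equiv 1$ on $B_{1/2}$, supported in $B_1$, and set $\tilde\psi:=\chi_R\psi^\lambda$. By Leibniz and the zero-mode identity,
\be
\bsigma\cdot(-ih\nabla+\tilde A)\tilde\psi=-ih\,(\bsigma\cdot\nabla\chi_R)\,\psi^\lambda,
\ee
so $\int|\bsigma\cdot(-ih\nabla+\tilde A)\tilde\psi|^2=h^2\int|\nabla\chi_R|^2|\psi^\lambda|^2$. Since $\nabla\chi_R$ is supported in $\{R/2\le|x|\le R\}$ with $|\nabla\chi_R|\le C/R$, and since $r\le R/2$ makes the decay bound $|\psi^\lambda(x)|^2\le Cr^{2m-1}|x|^{-2m-2}$ applicable on this shell, direct integration in spherical coordinates yields the kinetic estimate $C_m h^2 r^{2m-1}R^{-2m-1}$. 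The field energy is $\beta\int B^2=\beta h^2\lambda\|B_0\|_{L^2}^2=C\beta h^2/r$, and $\|\tilde\psi\|_{L^2}^2\ge\int_{|y|\le 1}|\psi_0(y)|^2\,dy>0$ (using $\chi_R\equiv 1$ on $B_{R/2}$ and non-vanishing of $\psi_0$) controls the normalization from below.

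Optimizing $h^2[C_m r^{2m-1}R^{-2m-1}+C\beta/r]$ over $r\in(0,R/2]$ gives an interior critical point $r_*\sim\beta^{1/(2m)}R^{1+1/(2m)}$ with minimum value of order $h^2\beta^{1-1/(2m)}R^{-1-1/(2m)}$. When $r_*\le R/2$, equivalently $\beta R\le c_m$, this absorbs the bounded factor $(\beta R)^{\delta-1/(2m)}$ into a constant $C_\delta$ to give the desired bound $C_\delta h^2\beta^{1-\delta}R^{-1-\delta}$, since $1/(2m)\le\delta$. In the complementary regime $\beta R\ge c_m$, replace the construction with the trivial one: $B\equiv 0$ and $\psi$ any fixed smooth $L^2$-normalized bump supported in $B_R$, giving kinetic energy $h^2\int|\nabla\psi|^2=O(h^2/R^2)$. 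The identity $h^2/R^2 = h^2\beta^{1-\delta}R^{-1-\delta}(\beta R)^{-(1-\delta)}$ and $\beta R\ge c_m$ yield $h^2/R^2\le c_m^{-(1-\delta)}h^2\beta^{1-\delta}R^{-1-\delta}$, so the bound also holds there; the two regimes match at $\beta R\sim 1$ where both sides are of order $h^2/R^2$.

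Finally, the clause "for all $A$ with $\nabla\times A=B$ in $B_R$" is handled by gauge invariance: since $B_R$ is simply connected, any admissible $A$ differs from $\tilde A$ (resp.\ from $0$ in the trivial case) by a gradient $\nabla\phi$ on $B_R$, and replacing $\psi$ with $e^{-i\phi/h}\psi$ leaves the support, the $L^2$-norm, and $\int|\bsigma\cdot(-ih\nabla+A)\psi|^2$ unchanged. The main technical obstacle is the coupling between the zero-mode decay exponent $m$ and $\delta$ (which forces $m$ large as $\delta\to 0$, making the constants $C_\delta,c_m$ degenerate) together with the two-scale balance between $r$ and $R$; the constraint $r\le R$ blocks the use of the zero-mode optimum for large $\beta R$, which is precisely why the separate trivial construction is required in that regime.
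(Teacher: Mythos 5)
Your proposal is correct and follows essentially the same route as the paper's proof: rescale the compactly supported, polynomially decaying zero-mode from Proposition~\ref{prop:zeromodedecay}, cut it off with $\chi(x/R)$, balance the cutoff error $h^2 r^{2m-1}R^{-2m-1}$ against the field energy $h^2\beta/r$, dispose of large $\beta R$ with the trivial $A=B=0$ construction, and finish by gauge transforming. The only differences are cosmetic (you rescale first by $\lambda=1/r$ and then multiply by $h$, whereas the paper does both at once; you bound the normalization below by $\int_{|y|\le 1}|\psi_0|^2$ rather than showing $\cN^2\to 1$), and do not affect the argument.
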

\begin{proof}
This is a simple localization and scaling argument based on the result in the
previous proposition. First note that we may assume that $\beta R$ is
small enough. In fact, if  $\beta R>c$ for some constant $c$ then 
$h^2\beta^{1-\delta}R^{-1-\delta}\geq c^{1-\delta}h^2R^{-2}$ and an
upper bound of the form $Ch^2R^{-2}$ can be achieved by choosing
$A=B=0$. 

Without loss of generality we may assume that the ball $B_R$ is
centered at the origin. Choose an integer $m$ such that $(2m)^{-1}\leq
\delta$. Let $\widetilde\psi$ be a zeromode as constructed in
Proposition~\ref{prop:zeromodedecay},
i.e, $$\bsigma\cdot(-i\nabla+\widetilde A)\widetilde\psi=0$$ with decay
$|\widetilde\psi(x)|\leq C|x|^{-m-1}$ and corresponding magnetic field
$\widetilde B=\nabla\times \widetilde A$ of compact support. We may assume
that $\widetilde\psi$ is normalized. For $\ell>0$ we define 
$$
\psi_\ell(x)=\ell^{-3/2}\widetilde\psi(x/\ell),\quad
A_\ell(x)=\ell^{-1}h\widetilde A(x/\ell).
$$
Then $B_\ell(x)=h\ell^{-2}\widetilde B(x/\ell)$ and 
$$
\bsigma\cdot(-ih\nabla+A_\ell)\psi_\ell=0.
$$
We can assume
$\widetilde B$ to be supported in a ball of radius $1$ centered at the origin 
(otherwise we rescale as just explained). Hence
$B_\ell$ is supported in a ball of radius $\ell$.  

Choose $\chi\in C_0^\infty( \bR^3)$ with support in the unit ball
centered at the origin and such that $0\leq\chi(x)\leq 1$ for all $x$ and $\chi(x)=1$ if $|x|<1/2$. Set
$$
\psi(x)=\cN^{-1}\chi(x/R)\psi_\ell(x).
$$
where the normalization constant is $\cN=\left(\int |\chi(x/R)\psi_\ell(x)|^2\rd x\right)^{1/2}$.
Then $\psi$ is supported in $B_R$. Moreover,
$1-C(\ell/R)^{2m-1}\leq \cN^2\leq 1,$ 
and thus if $R>2\ell$
\begin{align*}
  \int|\bsigma\cdot(-ih\nabla+A_\ell)\psi|^2&=
  h^2\cN^{-2}\int(\nabla(\chi(x/R)))^2|\psi_\ell(x)|^2\rd x\\
  &\leq C h^2R^{-2}(\ell/R)^{2m-1}.
\end{align*}
Hence 
\begin{align*}
\int|\bsigma\cdot(-ih\nabla+A_\ell)\psi|^2+\beta\int
B_\ell^2&\leq C h^2R^{-2}(\ell/R)^{2m-1}+Ch^2\beta\ell^{-1}\\
&= Ch^2\beta R^{-1}(\beta R)^{-\frac1{2m}}
\end{align*}
with the optimal choice  $\ell=CR(\beta R)^{\frac1{2m}}\leq R/2$ if
$\beta R$ is small enough.

Finally, if $A$ is any smooth vector potential with $\nabla\times
A=B_\ell$ in $B_R$ then we can gauge transform, i.e., find a smooth $\phi:B_R\to\bR$ such that 
$A_\ell=A+\nabla\phi$ in $B_R$. Then 
$$
\bsigma\cdot(-ih\nabla+A)e^{ih^{-1}\phi}\psi=e^{ih^{-1}\phi}\bsigma\cdot(-ih\nabla+A_\ell)\psi
$$
and thus the above bound holds with $\psi$ replaced by $e^{-ih^{-1}\phi}\psi$.
\end{proof}

\begin{proof}[Proof of the upper bound in Theorem~\ref{thm:stab}.]
By choosing $A=0$ we can always achieve a Weyl upper bound 
$$
h^3E(\beta, h, V) \leq-C\int[V]_+^{5/2}+\cE_V(h).
$$
Let us now show that if $(\beta h)^{-3+2\varepsilon}\int
[V]_+^{4-\varepsilon} > \int[V]_+^{5/2}$ then we can achieve the bound 
\begin{equation}\label{eq:smallbhbound}
  h^3E(\beta, h, V) \leq-C_\varepsilon(\beta h)^{-3+2\varepsilon}\int [V]_+^{4-\varepsilon}
\end{equation}
for $h$ small enough depending on $V$ and $\varepsilon$.  
Divide space into cubes of side length $\sqrt{h}$. For each cube consider the
minimal value $V_{\min}$ of $V$. 
Then in the cube $V\leq
V_{\min}+\sqrt{3h}\|\nabla V\|_\infty$. If $V_{\min}\leq \sqrt{h}\|\nabla
V\|_\infty$ we do nothing in the cube. If we denote the union of all
these cubes $WQ$ (for weak cubes) we find that 
\begin{equation}\label{eq:choiceh}
  \int_{WQ}[V]_+^{4-\varepsilon}\leq\
  C_Vh^{2-\varepsilon/2}
\end{equation}
for $C_V>0$ a constant depending only on $V$ (in particular on the
support of $V$). 

In each cube where $V_{\min}\geq \sqrt{h}\|\nabla V\|_\infty$ we have 
$V\leq (1+\sqrt{3})V_{\min}$. Each of these cubes
we fill with the maximal number of disjoint balls of radius 
$$
R=\kappa_2h(\beta h)^{1-2\varepsilon/3}V_{\min}^{-1+\varepsilon/3},
$$
where $\kappa_2>0$ is a constant which we will choose below depending
only on $\varepsilon$. 
Note that 
\begin{equation}\label{eq:choiceh2}
R\leq \kappa_2 ^{-1+\varepsilon/3}h^{1/2+\varepsilon/6}(\beta
h)^{1-2\varepsilon/3}\|\nabla V\|^{-1+\varepsilon/3}_\infty\leq
\sqrt{h}/2
\end{equation}
if $h$ is small enough depending
only on $V$ and $\varepsilon$ (recall that $\beta h$ is bounded from above in terms
of $V$). In particular, we can then fit at least one ball in the cube. 

In each of these balls we choose a magnetic field according to
Proposition~\ref{prop:compzeromode} with $\delta$ chosen such that 
$3\delta(1+\delta)^{-1}=\varepsilon$.
Let $B:\bR^3\to\bR^3$ be the
sum of all these disjointly supported magnetic fields and let 
$A$ be a corresponding vector potential. Let $B^{(i)}$and $R_i$ for $i=1,2,\ldots$ denote the
balls (there are only finitely many) and their radii. We can according to
Proposition~\ref{prop:compzeromode} for each $i$
find a normalized $\psi_i$ supported in $B^{(i)}$ such that 
\begin{align*}
  \int|\bsigma\cdot(-ih\nabla+A)\psi_{i}|^2-\int |\psi_i|^2V+\beta\int_{B^{(i)}}
  B^2&\leq C_\delta h^2\beta^{1-\delta} R_i^{-1-\delta}-V_{\min,i}\\&\leq
  (C_\delta\kappa_2^{-1-\delta}-1)V_{\min,i}\\&\leq -V_{\min,i}/2
\end{align*}
if $\kappa_2$ is large enough depending on $\delta$ , i.e., on
$\varepsilon$. Here $V_{\min,i}$ is the minimum of $V$ in the cube
containing the ball $B^{(i)}$. 

Since all the constructed balls are disjoint we have that 
$P=\sum_i|\psi_i\rangle\langle\psi_i|$ is an orthogonal projection and
hence 
\begin{align*}
E(\beta,h,V)&\leq
\Tr\left[\left((\bsigma\cdot(-ih\nabla+A))^2-V\right)P\right]+\beta\int
B^2\\
&\leq\sum_i-V_{\min,i}/2=\sum_i-\frac12\kappa_2^{-3}h^{-3}(\beta
h)^{-3+2\varepsilon}V_{\min,i}^{4-\varepsilon}R_i^3\\
&\leq-2C_\varepsilon h^{-3}(\beta
h)^{-3+2\varepsilon}\int_{\bR^3\setminus WQ}V^{4-\varepsilon},
\end{align*}
for some constant $C_\varepsilon>0$ depending only on $\varepsilon$. 
We have here used that in each of the cubes the balls take up a
certain fraction bounded below of the volume and that $V\leq
(1+\sqrt{3})V_{\min}$.  
Let us emphasize that $\kappa_2$ was chosen depending only on
$\varepsilon$.  This allows us to ensure that (\ref{eq:choiceh2}) is
satisfied if $h$ is small enough depending on $V$ and $\varepsilon$.  Finally, from 
(\ref{eq:choiceh}) we can choose $h$ so small depending on $V$ and $\varepsilon$ that 
$$
\int_{WQ}[V]_+^{4-\varepsilon}\leq \frac12 \int_{\bR^3}[V]_+^{4-\varepsilon}.
$$
This proves the claim (\ref{eq:smallbhbound}). 
\end{proof}

\section{Semiclassics for weak fields: Proof of
Theorem~\ref{thm:weak} }\label{sec:lower}

{\it Proof of Theorem~\ref{thm:weak}.} We discuss only the Pauli case,
the Schr\"odinger case is similar but  much easier and is left to the
reader. In the remaining part of the proof we will omit the
superscript P.

For the upper bound on $ h^3 E(\beta, h, V)$ we choose $A\equiv 0$
in the definition \eqref{eq:energy} and then we have
$E(\beta, h, V)\le E(\infty, h, V)$ and the second equality in
\eqref{eq:asymppauli} is just the usual non-magnetic semiclassical
asymptotics. 

For the lower bound, we first remark $V\ge 0$ can be assumed and that it is sufficient to prove
the result for $V\in C_0^\infty(\bR^3)$ by a standard approximation argument.
The error between $V\in L^{5/2}\cap L^4$ and its $C_0^\infty$-approximation
$\wt V$ can be made arbitrarily small in the $\|\cdot \|_{5/2} + \|\cdot \|_4$ norm.
Thus the replacement of $V$ with $\wt V$ can be controlled by
using  the magnetic Lieb-Thirring inequality \eqref{genlt}
and borrowing a small part of the kinetic energy and 
the magnetic energy. For more details, see Section 5.4 of \cite{ES3}
(with the only modification that instead of (5.58) of \cite{ES3}
use \eqref{LLSLT}).

Secondly, as discussed in Appendix~\ref{vectorfields}
we may in \eqref{eq:energy}  replace $\int_{\bR^3}|\nabla\times A|^2$ 
by  $\int_{\bR^3}|\nabla\otimes A|^2$, where the last integrand contains all derivatives.

Thirdly, we may replace $E(\beta, h, V)$ with
a localized version of the total energy. 
Let $0\le \phi^*(x)\le 1$ be a smooth function with $\mbox{supp}\, \phi^*\subset
B(1)$ and $\phi^*\equiv 1$ on $B(1/2)$,
 where $B(r)$ denotes the ball of radius $r$ centered at the origin.
Denote by $\phi_r(x)=\phi^*(x/r)$. 
Using a partition of unity, $\phi_r^2 + \eta_r^2\equiv 1$, 
and the IMS localization, we have 
\be
   \Tr [ T_h(A)-V]_- \ge \tr \big[ \phi_r(T_h(A)-V - h^2 I_r)\phi_r\big]_-
 +  \tr \big[ \eta_r\big(T_h(A)-V - h^2I_r\big)\eta_r\big]_-,
\label{IMSS}
\ee
where $I_r: =(\nabla \phi_r)^2 +(\nabla\eta_r)^2$
is supported in the shell $\{ r/2\le |x|\le r\}$.
The second term is bounded by the magnetic Lieb-Thirring inequality
similarly to \eqref{LLSLT}. More precisely, for any $\e>0$
there is a sufficiently large $r=r_\e\ge 1$ such that
\begin{align}
   \tr \big[ \eta_r\big(T_h(A)-V - h^2I_r\big)\eta_r\big]_-
  \ge & \;\tr  \big(T_h(A)-\wh V)_-
\non\\
\ge & - Ch^{-3}\int_{|x|\ge r} \wh V^{5/2}
  - Ch^{-3}(\beta h)^{-3} \Big(\int_{|x|\ge r} 
\wh V^{4}\Big) - \frac{\beta}{2}  \int |\nabla\times A|^2
\non \\
  = & \; - \e h^{-3}  - \frac{\beta}{2}  \int |\nabla\times A|^2
\end{align}
holds if $h \le h_\e$, 
where $\wh V(x) : =  V(x) + h^2I_r$.
Here we used the integrability conditions on $V$
and  that $\beta h$ is bounded from below.

It is therefore sufficient to give a lower bound on the first
term in \eqref{IMSS}, more precisely, 
we have
\be
E(\beta, h, V) \ge -\e h^{-3} + 
\inf_{A} \Big[\tr \big[ \phi(T_h(A)-W)\phi\big]_- + 
\frac{\beta}{2} \int_{\bR^3} |\nabla \otimes A|^2\Big],
\label{locest}
\ee
where we set $W:= V + h^2I_r$ and $\phi=\phi_r$ for brevity.

To estimate the right hand side of \eqref{locest},
we will follow the argument of Section~5 of \cite{ES3}.
We choose a length $L$ with
$ h\le L \le h^{1/2}$.
Let $\Omega_L: = B(r+L)$ be the $L$-neighborhood of $\Omega:=B(r)$.
Let $Q_k:=\{ y\in \bR^3\; : \; \| y-k\|_\infty < L/2\}$
with $k\in (L\bZ)^3\cap \Omega_L$ denote  a non-overlapping
covering  of $B(r)$
 with boxes of size $L$.
In this section
the index $k$ will always run over the set $(L\bZ)^3\cap \Omega_L$.
Let $\xi_k$ be a partition of unity, $\sum_k \xi^2_k\equiv 1$, subordinated to 
the collection of boxes $Q_k$, such that
$$
    \mbox{supp}\;\xi_k \subset (2Q)_k, \qquad
    |\nabla \xi_k|\leq CL^{-1},
$$
where $(2Q)_k$ denotes the cube of side-length $2L$
with center $k$. Let $\wt\xi_k$  be a cutoff
function such that $\wt\xi_k \equiv 1$ on $(2Q)_k$ (i.e. on the
support of $\xi_k$),
$\mbox{supp}\, \wt\xi_k\subset \wt Q_k:=(3Q)_k$
and $|\nabla \wt\xi_k|\leq CL^{-1}$.

Let $\langle A\rangle_k = |\wt Q_k|^{-1}\int_{\wt Q_k} A$,
$A_k: = (A- \langle A\rangle_k)\wt\xi_k$ and
$B_k : = \nabla\times A_k$,
then by Poincar\'e inequality we have
\begin{align}
\int_{\bR^3} B_k^2 \le  \int_{\wt Q_k} |\nabla \otimes A_k|^2 
 \le & \; C  \int_{\wt Q_k} |\nabla \otimes A|^2 + C L^{-2} \int_{\wt Q_k} |A- \langle A\rangle_k |^2 \non\\  
\leq & \; C\int_{\wt Q_k} |\nabla \otimes A|^2.
\label{ba}
\end{align}
{F}rom the IMS localization with a phase function $\psi_k$ satisfying 
$h\nabla\psi_k = \langle A_k\rangle$ we have
\be
\begin{split}
\Tr\big[  \phi ( T_h(A) -W)\phi \big]_- + \frac{\beta}{2}\int_{\bR^3} B^2 
= & \inf_\gamma  \Tr\Big( \gamma \phi[T_h(A)-W]\phi\Big)
 +\beta \int_{\bR^3}|\nabla \otimes A|^2 \cr
 \ge &  \inf_\gamma\sum_{k\in(L\bZ)^3\cap \Omega_L} \cE_k(\gamma)
\end{split}\label{qk}
\ee
with
$$
  \cE_k(\gamma):= \Tr \Big[ \gamma \xi_k e^{-i\psi_k}
\phi[T_h(A-\langle A\rangle_k)-W - Ch^2L^{-2}]\phi e^{i\psi_k}\xi_k 
\Big]
+c_0\beta\int_{\wt Q_k}|\nabla \otimes A|^2
$$
with some universal constant $c_0$ and after reallocating the localization error.
In \eqref{qk} the infimum is taken  over all density matrices $0\leq \gamma\leq 1$.
We also  reallocated the second integral to account for the finite
overlap of the cubes $\wt Q_k$. We introduce the notation
$$
  \cF_k:=c_0\beta \int_{\wt Q_k}|\nabla \otimes A|^2.
$$

Let $[H]_Q$ denote the operator $H$ with Dirichlet boundary conditions on the box $Q$.
For each fixed   box $\wt Q_k$ we apply  the magnetic Lieb-Thirring inequality
\cite{LLS} together with \eqref{ba} 
to obtain that for any density matrix
$\gamma$
\be
\begin{split}
\cE_k(\gamma) &\ge  \Tr\Big[  [T_h(A_k)-W- Ch^2L^{-2}]_{\wt Q_k} \Big]_- 
+\cF_k \non \cr
 & \ge - Ch^{-3}\int_{\wt Q_k}[ W+ Ch^2L^{-2}]^{5/2}
  - C \Big( \int_{\wt Q_k} [ W+ Ch^2L^{-2}]^4\Big)^{1/4}
    \Big( h^{-2}\int_{\wt Q_k} B_k^2\Big)^{3/4}+\cF_k \non \cr
& \ge - Ch^{-3}L^3 - Ch^{-6}L^3 \beta^{-3}
 -  \frac{c_0}{2}\,\beta\int_{\wt Q_k} |\nabla\otimes A|^2+\cF_k\non\cr
&\ge - Ch^{-3}L^3+\frac{1}{2}\cF_k 
\end{split}
\ee
using $h\le L$ and $\beta h\to \infty$. The constants
 $C$ depend on $\| W\|_\infty$.

Let $S\subset (L\bZ)^3\cap  \Omega_L$ 
denote the set of those $k$ indices 
such that 
\be
\begin{split}
 \cF_k
& \leq  C_1h^{-3}L^3.
\end{split}\label{aprio}
\ee
holds with some large constant $C_1$. In 
 particular, by choosing $C_1$ sufficiently large, we have 
\be
  \cE_k(\gamma)\ge 0, \quad \mbox{for all $k\not\in S$ and for
any $\gamma$}.
\label{cebig}
\ee

We use the Schwarz inequality
in the form
$$
T_h(A-\langle A\rangle_k) \ge -(1-\e_k)h^2\Delta - C\e_k^{-1}(A-\langle
A\rangle_k)^2,
$$
 with some $0<\e_k<\frac{1}{3}$, then
\be
\begin{split}
 \cE_k(\gamma)\ge &
\Tr\Big[\phi\xi_k 
[-(1-2\e_k)h^2\Delta-W-Ch^2L^{-2}]\xi_k\phi\Big]_-\cr & +
\Tr \Big[{\bf 1}_{\wt Q_k} [-\e_k h^2\Delta-C\e_k^{-1}(A-\langle
A\rangle_k)^2] {\bf 1}_{\wt Q_k}\Big]_- +\cF_k.\cr
\end{split}
\ee
We will show at the end of the section that
\be
\begin{split}
  \Tr\Big[ \phi\xi_k [-(1&-2\e_k)h^2\Delta -W-Ch^2L^{-2}]\xi_k
\phi\Big]_-\cr
  &\ge \Tr \Big[ \phi\xi_k(-h^2\Delta -W)\xi_k\phi\Big]_- 
- Ch^{-3}
  (\e_k + h^2L^{-2}) |\wt Q_k|.
\label{stand}
\end{split}
\ee
Using \eqref{cebig} and \eqref{stand},
\be
\begin{split}\label{schw}
\inf_\gamma\sum_k \cE_k(\gamma)\ge & \inf_\gamma\sum_{k\in S} 
\cE_k(\gamma) \cr
\ge&
 \sum_k \Tr\Big[\phi\xi_k [-h^2\Delta-W]\xi_k\phi\Big]_-
 + \sum_{k\in S} \cD_k\cr
\ge&  
  \sum_k \inf_{\gamma_k}\Tr\Big[ \xi_k \gamma_k\xi_k\phi
[-h^2\Delta-W]\phi\Big]
+ \sum_{k\in S} \cD_k\cr
\ge & \; \Tr \big[ \phi(-h^2\Delta-W)\phi
\big]_- +\sum_{k\in S} \cD_k
\end{split}
\ee
with
\be
 \cD_k:=
\Tr \Big[  [-\e_k h^2\Delta-
C\e_k^{-1}(A-\langle A\rangle_k)^2]_{\wt Q_k}\Big]_- 
- Ch^{-3}|\wt Q_k| (\e_k + h^2L^{-2})
+\cF_k .
\label{ce}
\ee
In the last step in \eqref{schw} 
we used that for any collection of density matrices
$\gamma_k$, the density matrix $\sum_k \xi_k \gamma_k
\xi_k$ is
 admissible  in the variational
principle 
\be
\Tr \big[ \phi(  -h^2\Delta  -W)\phi
\big]_- =
\inf\Big\{\Tr\gamma\big[-h^2\Delta-W\big]\; :
 \; 0\leq \gamma\leq 1, \Big\}.
\label{varprin}
\ee

We estimate $\cD_k$ for $k\in S$ as follows
\be\label{DD}
\begin{split}
\cD_k
\ge & -C\e_k^{-4} h^{-3}
 \int_{\wt Q_k} (A-\langle A\rangle_k)^5 
- Ch^{-3}|\wt Q_k|  (\e_k + h^2L^{-2})
 +\cF_k\cr
\ge &\;  \cF_k -C\e_k^{-4}h^{-3}\beta^{-5/2}L^{1/2}\cF_k^{5/2}
 - Ch^{-3} |\wt Q_k| (\e_k + h^2L^{-2}).
\end{split}
\ee
In the first step we used Lieb-Thirring inequality, in the second
step  H\"older and Sobolev inequalities
in the form
$$
   \int_{\wt Q_k} (A-\langle A\rangle_k)^5 \leq
  CL^{1/2} 
\Big(\int_{\wt Q_k}|\nabla\otimes A|^2\Big)^{5/2}.
$$
We choose
$$
  \e_k = \beta^{-1/2}L^{-1/2}\cF^{1/2}_k
$$
and using the a priori bound \eqref{aprio}, we see that
$$
  \e_k \leq Ch^{-1}L (\beta h)^{-1/2}.
$$
Thus, choosing
\be
   L = h(\beta h)^{1/10},
\label{Lcond}
\ee
we get $\e_k \le C(\beta h)^{-2/5}\leq 1/3$ as $\beta h\to \infty$.
With this choice of $\e_k$ and $L$  we have from \eqref{DD}
\be
  \cD_k 
\ge \cF_k - C (h\beta)^{-1/4}\cF_k^{1/2} - C h^{-3}L^3 (\beta h)^{-1/5}
\ge - C h^{-3} L^3 (\beta h)^{-1/5}.
\label{d1}
\ee
Summing up \eqref{d1} for all $k$ and using that 
$$
  \sum_{k\in (L\bZ)^3\cap \Omega_L} L^3 \leq Cr^3,
$$
we obtain
from  \eqref{schw} and \eqref{d1} 
\begin{align}
\inf_\gamma\sum_k \cE_k(\gamma)\ge &
\;\Tr \Big[\phi(-h^2\Delta-W)\phi\Big]_- - Ch^{-3}(\beta h)^{-1/5}r^3 \non \\
 \ge & -\frac{2}{15\pi^2} (1+o(1)) \int W^{5/2}  - Ch^{-3}(\beta h)^{-1/5}r^3 \label{schw3}
\end{align}
using the standard semiclassical asymptotics for $\Tr \big[\phi(-h^2\Delta-W)\phi\big]_- \ge
\Tr (-h^2\Delta-W)_-$. 
Together with \eqref{qk}
 this proves the required lower bound 
for the second term in \eqref{locest}. The difference between
$\int W^{5/2}$ and $\int V^{5/2}$ is negligible as $h\to 0$.
 Letting first $h\to 0$
together with $\beta h\to \infty$ and then letting $\e\to 0$
we obtain the lower bound in \eqref{eq:asymppauli}.

\bigskip

Finally, we prove \eqref{stand}.
 Let $\gamma$ be a trial density matrix for
the left hand side of \eqref{stand}. We can assume
that
$$
  0\ge \Tr\Big[\gamma \phi
\xi_k [-(1-2\e_k)h^2\Delta -W-Ch^2L^{-2}]\xi_k \phi\Big].
$$
Recalling that $\e_k\le \frac{1}{3}$, we have 
\be
\begin{split}
 0 \ge & \Tr\Big[\gamma \phi
\xi_k [-\frac{1}{6}h^2\Delta +1]\xi_k \phi\Big] 
+ \Tr\Big[\gamma\phi
\xi_k [-\frac{1}{6}h^2\Delta -W-Ch^2L^{-2}-1]\xi_k \phi\Big]\cr
\ge  &\Tr\Big[\gamma \phi
\xi_k [-\frac{1}{6}h^2\Delta +1]\xi_k \phi\Big] 
 - Ch^{-3}\int_{\wt Q_k} [ W+ 1 + Ch^2L^{-2}]^{5/2} ,
\end{split}
\ee
where we used Lieb-Thirring inequality. Thus, using  $h\le L$, we have
$$
  \Tr\Big[\gamma \phi
\xi_k [-\frac{1}{6}h^2\Delta +1]\xi_k \phi\Big] 
 \leq Ch^{-3}|\wt Q_k|
$$
with a constant depending on $W$.
Therefore
\be
\begin{split}
  \Tr\Big[ \gamma
\phi\xi_k [-(1&-2\e_k)h^2\Delta -W-Ch^2L^{-2}]\xi_k
\phi\Big]\cr
  &\ge \Tr \Big[ \gamma
\phi\xi_k(-h^2\Delta -W)\xi_k\phi\Big]
 - Ch^{-3}
  (\e_k + h^2L^{-2}) |\wt Q_k|.
\label{stand3}
\end{split}
\ee
Now  \eqref{stand} follows
by the variational principle.
$\Box$

\appendix
\section{Equivalent forms of the energy}
\label{vectorfields}

We will consider the equivalence of the total energy where we have different restrictions on the vector potentials. We allow the energy to possibly have an extra localization. So we end up considering
\begin{align}
E_{\times}(A) = \tr\big[ \psi(T_h(A) - V) \psi ]_{-}
 + \beta \int_{\bR^3} |\nabla \times A|^2,
\end{align}
where $\psi \in C^{\infty}(\bR^3)$ satisfies $0\leq \psi \leq 1$.
Similarly, we define
\begin{align}
E_{\otimes}(A) = \tr\big[ \psi(T_h(A) - V) \psi ]_{-}+ \beta \int_{\bR^3} |\nabla \otimes A|^2,
\end{align}
where $|\nabla \otimes A|^2=\sum_{i,j=1}^3|\partial_iA_j|^2$.
Some natural domains of definition are given below:
\begin{align*}
{\mathcal D}_1 &= \Big\{ A \in L^6(\bR^3,\bR^3)\,:\, 
\int |\nabla \times A|^2 < \infty \Big\}, \\
{\mathcal D}_2 &= C_0^{\infty}(\bR^3,\bR^3),\\
{\mathcal D}_3 &= H^1(\bR^3,\bR^3),\\
{\mathcal D}_4&=\{ A \in {\mathcal D}_1\,:\,\nabla \cdot A = 0\}.
\end{align*}
In the case of $E_{\otimes}$, the expression $\int |\nabla \times A|^2$ in ${\mathcal D}_1$ 
should be replaced by $\int |\nabla \otimes A|^2$.

We will only assume that $V \in L^1_{\rm loc}(\bR^3)$. The trace in the above expressions 
should then be interpreted as 
\begin{align}\label{eq:TraceFriedrichs}
\inf \sum_{j=1}^N \langle \phi_j | \psi(T_h(A) - V) \psi \phi_j \rangle,
\end{align}
where $\{\phi_j\}_{j=1}^N$ runs over all orthonormal subsets of $C_0^{\infty}(\bR^3)$.
 If this infimum is different from $-\infty$, it implies in particular that the 
quadratic form of $\psi(T_h(A) - V) \psi$ defined on $C_0^{\infty}$ is semibounded 
from below. In that case \eqref{eq:TraceFriedrichs} will be equal to the trace of 
the negative part of the Friedrichs extension of this quadratic form, thereby justifying the notation.

\begin{proposition}
We have for all $i,j \in \{ 1, 2, 3, 4\}$,
\begin{align}
\inf_{A \in {\mathcal D}_i} E_{\times}(A) = 
\inf_{A \in {\mathcal D}_j} E_{\otimes}(A). \end{align}
\end{proposition}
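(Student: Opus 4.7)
The plan rests on three observations: (i) the two energies agree on divergence-free fields, (ii) the trace term is gauge invariant, and (iii) smooth compactly supported divergence-free fields are dense in $\mathcal{D}_4$ in a topology that controls both pieces of the energy.

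For (i), integration by parts for $A \in C_0^\infty$ gives
\begin{equation*}
\int |\nabla \otimes A|^2 \;=\; \int |\nabla \times A|^2 + \int |\nabla \cdot A|^2,
\end{equation*}
which extends by density to $\mathcal{D}_4$ (on this space $\nabla A \in L^2$ follows from $\nabla \times A \in L^2$ and $\nabla\cdot A=0$ via Riesz transforms). In particular $E_\times(A)=E_\otimes(A)$ for every $A\in\mathcal{D}_4$. For (ii), the identity $(-ih\nabla+A)(e^{i\phi/h}g)=e^{i\phi/h}(-ih\nabla+A+\nabla\phi)g$ yields, in both the Schr\"odinger and Pauli cases, $T_h(A+\nabla\phi) = e^{-i\phi/h}T_h(A)e^{i\phi/h}$. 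Since the real cutoff $\psi$ commutes with multiplication by $e^{\pm i\phi/h}$, this unitary carries the variational problem \eqref{eq:TraceFriedrichs} for $A$ onto the one for $A+\nabla\phi$, so that $\tr[\psi(T_h(A+\nabla\phi)-V)\psi]_- = \tr[\psi(T_h(A)-V)\psi]_-$. Combined with $\nabla\times(A+\nabla\phi)=\nabla\times A$, this shows $E_\times$ is gauge invariant.

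Next I would use the Helmholtz decomposition $A=A_\perp+\nabla\phi$ with $\nabla\cdot A_\perp=0$, obtained by solving $\Delta\phi=\nabla\cdot A$ through Riesz transforms; this sends both $\mathcal{D}_1$ and $\mathcal{D}_3$ into $\mathcal{D}_4$. By (ii) we have $E_\times(A)=E_\times(A_\perp)$. For $E_\otimes$, the cross term in the expansion of $\int |\nabla\otimes(A_\perp+\nabla\phi)|^2$ reduces after integration by parts to $2\int(\nabla\cdot A_\perp)\,\Delta\phi=0$, so
\begin{equation*}
\int |\nabla\otimes A|^2 \;=\; \int |\nabla\otimes A_\perp|^2 + \int |\nabla^2\phi|^2 \;\geq\; \int |\nabla\times A_\perp|^2,
\end{equation*}
and hence $E_\otimes(A)\geq E_\otimes(A_\perp)=E_\times(A_\perp)$. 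Together with $\mathcal{D}_4\subset\mathcal{D}_i$ and step (i) this yields the chain $\inf_{\mathcal{D}_i}E_\times = \inf_{\mathcal{D}_4}E_\times = \inf_{\mathcal{D}_4}E_\otimes \leq \inf_{\mathcal{D}_i}E_\otimes$, with the matching lower bound $\inf_{\mathcal{D}_i}E_\otimes \geq \inf_{\mathcal{D}_4}E_\times$ supplied by the Helmholtz argument above. This settles the proposition for $i,j\in\{1,3,4\}$.

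Finally, to include $\mathcal{D}_2$ I would approximate any $A\in\mathcal{D}_4$ by $A_n\in C_0^\infty$ that are divergence-free, built by mollification (which commutes with $\nabla\cdot$), smooth truncation, and a Helmholtz correction with compactly supported gradient to restore $\nabla\cdot A_n=0$; one can arrange $\int|\nabla\times A_n|^2\to\int|\nabla\times A|^2$ and $A_n\to A$ in $L^6_{\mathrm{loc}}$. For each fixed test state $\phi_j\in C_0^\infty$, the quadratic form $\langle\phi_j,\psi(T_h(A_n)-V)\psi\phi_j\rangle$ converges (using $V\in L^1_{\mathrm{loc}}$), so \eqref{eq:TraceFriedrichs} forces $\limsup_n E_\times(A_n)\leq E_\times(A)$, hence $\inf_{\mathcal{D}_2}E_\times\leq\inf_{\mathcal{D}_4}E_\times$, and the reverse inequality is automatic. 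The principal obstacle is precisely this density step: one must steer the approximation so that neither the field energy nor the trace jumps, and must separately handle the degenerate case $\inf E=-\infty$, where instability at one pair must propagate to all $(\mathcal{D}_i,E_{\times/\otimes})$; for the latter one invokes the magnetic Lieb-Thirring bound from Theorem~\ref{thm:lls} to conclude that stability on $\mathcal{D}_4$ already implies stability on every $\mathcal{D}_i$ for both energies.
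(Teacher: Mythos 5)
Your overall strategy (Helmholtz decomposition plus gauge invariance plus $E_\times=E_\otimes$ on divergence-free fields) is the same core idea the paper uses, but the route you take through the four spaces has genuine gaps, and interestingly they are exactly the points the paper goes out of its way to avoid.

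First, the chain
$\inf_{\mathcal{D}_i}E_\times = \inf_{\mathcal{D}_4}E_\times$ needs both directions, and the direction
$\inf_{\mathcal{D}_i}E_\times \leq \inf_{\mathcal{D}_4}E_\times$ is only automatic from an inclusion $\mathcal{D}_4\subset\mathcal{D}_i$. That inclusion holds for $i=1$ but \emph{fails} for $i=3$: an element of $\mathcal{D}_4$ is only required to lie in $L^6$, and generically decays like $|x|^{-2}$, so it is not in $L^2$ and not in $H^1=\mathcal{D}_3$. It also fails trivially for $i=2$. The Leray projection sends $\mathcal{D}_3$ into $\mathcal{D}_3\cap\mathcal{D}_4$, not onto all of $\mathcal{D}_4$, so the Helmholtz step alone does not close the gap for $i=3$. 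The paper sidesteps this entirely: it only ever compares $\mathcal{D}_4$ with $\mathcal{D}_1$ (by inclusion $\mathcal{D}_4\subset\mathcal{D}_1$) and with $\mathcal{D}_2$ (by a gauge argument), having \emph{first} established $\inf_{\mathcal{D}_1}=\inf_{\mathcal{D}_2}=\inf_{\mathcal{D}_3}$ by inclusion $\mathcal{D}_2\subset\mathcal{D}_3\subset\mathcal{D}_1$ together with a mollification-and-cutoff argument that imposes no divergence constraint.

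Second, your handling of $\mathcal{D}_2$ attempts to approximate $A\in\mathcal{D}_4$ by fields that are \emph{both} compactly supported \emph{and} divergence-free. This is precisely the step the paper warns cannot be taken for granted (``Notice though that we do not prove that one can both impose compact support and zero divergence at the same time''). A gradient correction $\nabla\phi$ cannot in general be made compactly supported: if $\nabla\phi$ has compact support then $\phi$ can be taken compactly supported, and $\Delta\phi=f$ then forces all polyharmonic moments of $f$ to vanish, which need not hold for $f=A_\eta\cdot\nabla\chi_R$. A Bogovskii-type correction does have compact support, but it is not a gradient, so it changes the curl and hence the field energy. The right move, and the one the paper takes, is to approximate a general $A\in\mathcal{D}_1$ (thus also a general $A\in\mathcal{D}_4$) by $C_0^\infty$ fields \emph{without} the divergence constraint, controlling the $L^6$ norm of $A$ and the $L^2$ norm of $\nabla\times A$, and then to pass from $\mathcal{D}_2$ back to $\mathcal{D}_4$ by a gauge transformation: for $A\in C_0^\infty$ choose $A'\in\mathcal{D}_4$ with $\nabla\times A'=\nabla\times A$, so $A-A'=\nabla\eta$ with $\Delta\eta=\nabla\cdot A$ smooth, hence $\eta$ smooth, and $e^{i\eta/h}\phi_j$ is an admissible test function.

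Third, even on $\mathcal{D}_1$ alone, applying the gauge argument directly is delicate: for a rough $A$ the gauge function $\phi$ solving $A-A_\perp=\nabla\phi$ has only limited regularity, and $e^{i\phi/h}\phi_j$ leaves the class $C_0^\infty$ over which the trace \eqref{eq:TraceFriedrichs} is defined. By doing the gauge step only on $\mathcal{D}_2$, the paper keeps everything smooth and avoids having to enlarge the test class. If you reorganize your argument in the paper's order --- first the three curl-energy spaces $\mathcal{D}_1,\mathcal{D}_2,\mathcal{D}_3$ by inclusion and density, then bring in $\mathcal{D}_4$ via $\mathcal{D}_4\subset\mathcal{D}_1$ on one side and a gauge transformation from $\mathcal{D}_2$ on the other, and finally pass from $E_\times$ to $E_\otimes$ using $\int|\nabla\otimes A|^2=\int|\nabla\times A|^2+\int|\nabla\cdot A|^2$ --- the gaps disappear.
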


Notice though that we do not prove that one can both impose compact support and 
zero divergence at the same time.

\begin{proof}
Consider first the $E_{\times}$. Clearly, ${\mathcal D}_2 \subset {\mathcal D}_3 
\subset {\mathcal D}_1$ (using the Sobolev inequality to get the last inclusion)
 which implies corresponding inequalities for the energies. We will now prove that
 $\inf_{A \in {\mathcal D}_1} E_{\times}(A) \geq \inf_{A \in {\mathcal D}_2} E_{\times}(A)$. 
But for any $A \in {\mathcal D}_1$ and any finite collection $\{ \phi_j \} \subset C_0^{\infty}(\bR^3)$ 
we can get arbitrarily close to 
\begin{align*}
\sum_{j=1}^N \langle \phi_j | \psi(T_h(A) - V) \psi \phi_j \rangle + \beta \int |\nabla \times A|^2
\end{align*}
by simultaneously approximating $A$ in $L^6$-norm and $\nabla \times A$ in $L^2$, by a
 $C_0^{\infty}$ vector field. Therefore
\begin{align*}
\inf_{A \in {\mathcal D}_1} E_{\times}(A) = \inf_{A \in {\mathcal D}_2} E_{\times}(A) 
=\inf_{A \in {\mathcal D}_3} E_{\times}(A).
\end{align*}

Clearly $\inf_{A \in {\mathcal D}_1} E_{\times}(A) \leq \inf_{A \in {\mathcal D}_4} E_{\times}(A)$.
We will prove that $\inf_{A \in {\mathcal D}_2} E_{\times}(A) \geq \inf_{A \in {\mathcal D}_4} E_{\times}(A)$,
 thereby establishing the equality for all four energies $E_{\times}$.
 Let $A \in C_0^{\infty}(\bR^3,\bR^3)$ and $\{ \phi_j \}_{j=1}^N \subset C_0^{\infty}(\bR^3)$.
 Then $B=\nabla \times A \in L^2(\bR^3)$ and therefore there exists $A' \in {\mathcal D}_4$ 
with $\nabla \times A' = B$ (see \cite{FLL}). It follows that there exists $\eta$ 
with $A-A' = \nabla \eta$ and therefore (since $\Delta \eta = \nabla \cdot A$), 
$\eta \in C^{\infty}(\bR^3)$. But then
\begin{align*}
\sum_{j=1}^N \langle \phi_j | \psi(T_h(A) - V) \psi \phi_j \rangle = 
\sum_{j=1}^N \langle e^{i\eta} \phi_j | \psi(T_h(A') - V) \psi (e^{i\eta}\phi_j) \rangle,
\end{align*}
which establishes the desired inequality.
Since $\int |\nabla \otimes A|^2 = \int |\nabla \times A|^2 + \int |\nabla \cdot A|^2$, 
the same arguments give the identities for the $E_{\otimes}$ versions of the energies.

Finally we prove that $\inf_{A \in {\mathcal D}_4} E_{\times}(A) = \inf_{A \in {\mathcal D}_4}
 E_{\otimes}(A)$. But this is obvious since the field energies are identical when $\nabla \cdot A=0$.
\end{proof}

\section{Self-generated magnetic fields lower the energy}\label{sec:paramagnetism}

In this appendix we will show that self-generated magnetic fields may
indeed decrease the energy, i.e., inequality (\ref{eq:paramagnetism}).

{\it Proof of (\ref{eq:paramagnetism}).} For the Pauli operator we
already remarked this fact as a consequence of
Theorem~\ref{thm:stab}. Alternatively, it also follows from the instability in
\eqref{instab} in Appendix~\ref{app:stab} below, since the non-magnetic Hydrogen atom is stable. For the
Schr\"odinger operator this statement was essentially proved in
\cite{ELV} (see also \cite{FLW}) by considering the perturbative
regime as a small magnetic field is turned on.  A simple first order
perturbation argument shows that the lowest eigenvalue increases
quadratically in $B$. In a spherical geometry the higher non-magnetic
eigenvalues are degenerate and some of them carry non-trivial
current. These eigenvalues will split linearly when a small magnetic
field is turned on.  To see this explicitly we can consider a
spherically symmetric harmonic oscillator in a constant magnetic
field, i.e., $V(x)=|x|^2$ and $A(x,y,z)=(By/2,-Bx/2,0)$ with $B>0$
constant.  The eigenvalues of the operator $(-i\nabla-A)^2+|x|^2$ are
(see \cite{Fo})
$$
e(n_1,n_2,n_3)=(n_1+n_2+1)\sqrt{1+B^2} +(n_3+1/2)+(n_1-n_2)B
$$
with $n_1, n_2, n_3\in \bN$.
Thus as an explicit example 
$$
\tr ((-i\nabla-A)^2+|x|^2-5/2)_-=3\sqrt{1+B^2}-4-B
$$
which of course explicitly decreases as a small $B$ is increased from zero.
It is now clear that we can find 
$\widetilde A\in C_0^\infty(\bR^3;\bR^3)$ which approximates $A$ such that 
$$\tr ((-i\nabla-\widetilde A)^2+|x|^2-5/2)_-<
\tr (-\Delta+|x|^2-5/2)_- =-1
$$ and hence for $\beta>0$ sufficiently small
$$
\tr ((-i\nabla-\widetilde A)^2+|x|^2-5/2)_-+\beta\int
|\nabla\times\widetilde A|^2<\tr [-\Delta+|x|^2-5/2]_-.
$$
\qed

\section{Stability conditions}\label{app:stab}

Using the argument in  \cite{FLL} it is easy to show the following
stability result on the one-electron energy.
\begin{proposition} Let $V\in L^1_{loc}(\bR^3)$ with $V_+\in
  L^3(\bR^3)\cap L^{3/2}(\bR^3)$. Then for all $\psi\in H^1(\bR^3)$
  with $\|\psi\|_{L^2}=1$ we have 
  $$
  \langle \psi,(T_h^{\rm P}(A)-V)\psi\rangle+\beta\int B^2\geq0
  $$
  if $\beta^{-1}h^{-2}\|V_+\|_3$ and $h^{-2}\|V_+\|_{3/2}$ are (universally) small enough. 
\end{proposition}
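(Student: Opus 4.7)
\smallskip

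My plan is to follow the strategy of \cite{FLL} and bound the energy on any trial $\psi$ with $\|\psi\|_2=1$ by the two quantities
$$
X:=\|(-ih\nabla+A)\psi\|_2,\qquad Y:=\|B\|_2,
$$
using only Hölder, Sobolev, diamagnetic inequality, and Young's inequality. First I would rewrite the Pauli kinetic energy via the standard identity $T_h^{\rm P}(A)=(-ih\nabla+A)^2+h\bsigma\cdot B$ to get
$$
\langle\psi,(T_h^{\rm P}(A)-V)\psi\rangle+\beta\int B^2\geq X^2-h\int|B||\psi|^2-\int V_+|\psi|^2+\beta Y^2,
$$
so it suffices to show the right-hand side is nonnegative.

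For the potential term, Hölder and Sobolev (plus the diamagnetic inequality $\|\nabla|\psi|\|_2\leq h^{-1}X$) give
$$
\int V_+|\psi|^2\leq \|V_+\|_{3/2}\|\psi\|_6^2\leq C h^{-2}\|V_+\|_{3/2}\,X^2,
$$
which is absorbed into $\tfrac12 X^2$ provided $h^{-2}\|V_+\|_{3/2}$ is small enough. For the Zeeman term I would use $h\int|B||\psi|^2\leq h\|B\|_2\|\psi\|_4^2$ together with Gagliardo--Nirenberg $\|\psi\|_4^2\leq C\|\nabla|\psi|\|_2^{3/2}\|\psi\|_2^{1/2}\leq Ch^{-3/2}X^{3/2}$ and then Young's inequality to split it between the field energy and a residual cubic term:
$$
h\int|B||\psi|^2\leq Ch^{-1/2}YX^{3/2}\leq \tfrac{\beta}{2}Y^2+\frac{C X^3}{2h\beta}.
$$
Combining these two estimates yields the preliminary bound
$$
\langle\psi,(T_h^{\rm P}(A)-V)\psi\rangle+\beta\int B^2\geq \tfrac12 X^2+\tfrac{\beta}{2}Y^2-\frac{CX^3}{2h\beta}.
$$

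The residual $X^3/(h\beta)$ term is the main obstacle: it cannot be absorbed by $X^2$ alone without an a priori bound on $X$, and this is precisely where the second smallness hypothesis enters. I would handle it by splitting into two regimes. In the regime $X\leq h\beta/C$ the preliminary bound already gives $\tfrac12 X^2(1-CX/(h\beta))\geq 0$, so we are done. In the regime $X>h\beta/C$ I would instead control the potential by the alternative Hölder--Sobolev bound
$$
\int V_+|\psi|^2\leq \|V_+\|_3\|\psi\|_3^2\leq Ch^{-1}\|V_+\|_3\,X,
$$
which combined with the same Zeeman estimate produces
$$
Q(\psi,A)\geq X^2-\frac{CX^3}{2h\beta}-Ch^{-1}\|V_+\|_3\,X+\tfrac{\beta}{2}Y^2.
$$
Dividing by $X>0$ reduces the nonnegativity of the cubic to the quadratic inequality $\tfrac{C}{2h\beta}X^2-X+Ch^{-1}\|V_+\|_3\leq 0$, whose discriminant is $1-2C^2\beta^{-1}h^{-2}\|V_+\|_3$. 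Under the hypothesis that $\beta^{-1}h^{-2}\|V_+\|_3$ is small this discriminant is positive, the relevant interval of $X$ is nonempty, and a direct calculation shows that the threshold $h\beta/C$ lies inside it. Thus the two regimes together cover all values of $X$ and yield $Q\geq 0$. The only nontrivial step is verifying that the interval produced by the quadratic in the second regime actually contains the complement $X>h\beta/C$ of the first, and this is exactly what forces the particular combination $\beta^{-1}h^{-2}\|V_+\|_3$ appearing in the hypothesis.
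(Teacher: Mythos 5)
The strategy is close in spirit to the paper's (and to \cite{FLL}): rewrite the Pauli form via $T_h^{\rm P}(A)=(-ih\nabla+A)^2+h\bsigma\cdot B$, bound the Zeeman term by Cauchy--Schwarz and Sobolev, complete a square in $B$, and split into two regimes with two choices of $L^p$-exponent for $V_+$. However, there is a genuine gap in the regime-splitting step.

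\textbf{The gap.} You parametrize by $X=\|(-ih\nabla+A)\psi\|_2$ and, after Young's inequality, obtain in the second regime a lower bound of the form
\[
Q\;\geq\; X^2-\frac{CX^3}{2h\beta}-Ch^{-1}\|V_+\|_3\,X .
\]
Dividing by $X$, the condition for nonnegativity is the quadratic inequality $\frac{C}{2h\beta}X^2-X+Ch^{-1}\|V_+\|_3\leq 0$, which holds only on a \emph{bounded} interval $[X_-,X_+]$ with $X_+\lesssim h\beta/C$. It does \emph{not} hold for all $X>h\beta/C$: for $X\gg h\beta$ the cubic term $-CX^3/(2h\beta)$ dominates and the right-hand side tends to $-\infty$. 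Already for $V=0$ (where $Q\geq 0$ is trivial) your lower bound $X^2 - CX^3/(2h\beta)+\tfrac{\beta}{2}Y^2$ becomes negative for $X$ large, which shows the estimate chain, not the proposition, has lost the positivity. So the two regimes do not cover all $X$, and the concluding sentence of your argument is false.

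\textbf{Where the information is lost, and the missing idea.} The culprit is the Gagliardo--Nirenberg step $\|\psi\|_4^2\leq Ch^{-3/2}X^{3/2}$, which replaces $\|\psi\|_6$ by the much larger quantity $h^{-1}X$. Keeping the sharper H\"older bound $\|\psi\|_4^2\leq \|\psi\|_6^{3/2}\|\psi\|_2^{1/2}=\|\psi\|_6^{3/2}$ and parametrizing everything by $u=\|\psi\|_6$ is better, but still produces a cubic $u^3/\beta$ after completing the square, so this alone does not suffice. The key structural input you are missing is the observation that, since $T_h^{\rm P}(A)=[\bsigma\cdot(-ih\nabla+A)]^2\geq 0$, one may replace it by $\varepsilon\,T_h^{\rm P}(A)$ for any $0<\varepsilon\leq 1$:
\[
\langle\psi,T_h^{\rm P}(A)\psi\rangle \;\geq\; \varepsilon\Big[\|(-ih\nabla+A)\psi\|_2^2 + h\langle\psi,\bsigma\cdot B\psi\rangle\Big]\;\geq\; \varepsilon\,C_S h^2\|\psi\|_6^2-\varepsilon h\|B\|_2\|\psi\|_6^{3/2}.
\]
After completing the square in $B$, the dangerous term becomes $\varepsilon^2 h^2\|\psi\|_6^3/(4\beta)$. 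The paper's proof chooses $\varepsilon=1$ and $p=3/2$ when $\|\psi\|_6\leq 2C_S\beta$, and $\varepsilon=2C_S\beta/\|\psi\|_6\leq 1$ together with $p=3$ when $\|\psi\|_6>2C_S\beta$; with the latter choice the $\varepsilon^2\|\psi\|_6^3$ term scales only linearly in $\|\psi\|_6$ and is matched by the kinetic and potential contributions. This $\varepsilon$-rescaling is precisely what handles the regime your argument cannot reach, and is what forces the combination $\beta^{-1}h^{-2}\|V_+\|_3$ in the hypothesis.
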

\begin{proof}
Let $C_S>0$ be the Sobolev constant,  i.e., 
$\int|\nabla\psi|^2\geq C_S\|\psi\|_6^2$.
Since $$T^{\rm P}_h(A)=(-ih\nabla+A)^2+h\bsigma\cdot B$$ 
we estimate for all $0<\varepsilon\leq1$
\begin{align*}
  \langle \psi,T_h^{\rm P}(A)\psi\rangle+\beta\int B^2&\geq 
  C_Sh^2\varepsilon\|\psi\|_6^2-h\varepsilon\int|B||\psi|^2+\beta\int B^2\\&\geq 
  C_Sh^2\varepsilon\|\psi\|_6^2-(4\beta)^{-1}\varepsilon^2h^2\int|\psi|^4\\
  &\geq C_Sh^2\varepsilon\|\psi\|_6^2-(4\beta)^{-1}\varepsilon^2h^2\|\psi\|_6^{3}\|\psi\|_2.
\end{align*}
We will also use that for $p\geq 3/2$ we have the H\"older inequality
$$
\int V|\psi|^2\leq \|V_+\|_p\|\psi\|_6^{3/p}\|\psi\|_2^{2-3/p}.
$$
We consider two cases.

{\bf Case 1:} $\|\psi\|_6\leq 2C_S\beta$. We set $\varepsilon=1$ and
$p=3/2$ above
and find since $\|\psi\|_2=1$
$$
\langle \psi,(T_h^{\rm P}(A)-V)\psi\rangle+\beta\int B^2\geq \frac12
C_sh^2\|\psi\|_6^2-\|V_+\|_{3/2}\|\psi\|_6^2
$$
from which it follows that the energy is non-negative if
$\|V_+\|_{3/2}\leq C_Sh^2/2$. 
\medskip

{\bf Case 2:} $\|\psi\|_6\geq 2C_S\beta$. Let
$\varepsilon=2C_S\beta\|\psi\|_6^{-1}\leq 1$ and $p=3$. 
Then 
$$
\langle \psi,(T_h^{\rm P}(A)-V)\psi\rangle+\beta\int B^2\geq
C_S^2h^2\beta\|\psi\|_6-\|V_+\|_3\|\psi\|_6.
$$
Hence the energy is non-negative if $\|V_+\|_3\leq C_S^2h^2\beta$.
\end{proof}

It follows immediately from this proposition that the one-electron
energy $E^{\rm P}_0(\beta,h,V)$ is finite if $V_+\in L^3(\bR^3)$. In
fact, all we have to argue is that $\beta h^{-2}\|[V-e]_+\|_3$ and
$h^{-2}\|[V-e]_+\|_{3/2}$ can be made small enough by choosing $e>0$
large enough. In this way $-e$ can be made a lower bound on $E_0^{\rm
  P}$. Since $V_+\in L^3$ we can of course make $\|[V-e]_+\|_3$
arbitrarily small. Using that $[V-e]_+^{3/2}\leq
(2e^{-1})^{3/2}[V-e/2]_+^3$ we can do the same with
$\|[V-e]_+\|_{3/2}$.

This stability criterion is essentially sharp. In fact, applying 
the method of proof as in the proposition above and the
construction of zero-modes in \cite{LY} it was proved in \cite{FLL}
that for the Coulomb potential $V(x)=c|x|^{-1}$ there is
a critical value $\gamma_{\rm cr}$ such that the one-electron energy
satisfies \be E_0^{\rm P}(\beta, h, V)>-\infty \qquad \mbox{if} \;\;
\gamma_{\rm cr}\beta h^2> c
\label{stabh0}
\ee
and
\be
    E^{\rm P}_0(\beta, h, V)= -\infty \qquad \mbox{if} \;\; \gamma_{\rm
      cr}\beta h^2< c.
\label{instab0}
\ee
Since $E^{\rm P}\leq E^{\rm P}_0$ it is clear that (\ref{instab0})
implies that even for the cutoff Coulomb potential $V=[c|x|^{-1}-1]_+$
we have 
\be
    E^{\rm P}(\beta, h, V)= -\infty \qquad \mbox{if} \;\; \gamma_{\rm
      cr}\beta h^2< c.
\label{instab}
\ee
However there is also a value $\gamma'_{\rm cr}>0$ such that
 \be
    E^{\rm P}(\beta, h, V)>-\infty \qquad \mbox{if} \;\; \gamma'_{\rm
      cr}\beta h^2> c
\label{stabh}
\ee 
This stability statement follows, e.g., by localizing in an appropriate ball 
and then follow the proof of Lemma 2.1 \cite{ES3} for the inner regime
(with the choice of $Z=h^{-2}$, $\delta= Z^{1/3}$, $D=RZ^{1/3}$
and $16\pi\al^2= \beta^{-1}$). In the outer regime the operator
has a compactly supported bounded potential (that includes the localization
error) so its energy is controlled by the magnetic 
Lieb-Thirring inequality as 
in \eqref{LLSLT}.

\end{document}